\newtheorem{defn}{\noindent $\mathbf{Definition}$}
\newtheorem{prop}[defn]{$\mathbf{Proposition}$}
\newtheorem{cor}[defn]{$\mathbf{Corollary}$}
\begin{document}

\title{The Theory of Computational Quasi-conformal Geometry on Point Clouds}
%\subtitle{Do you have a subtitle?\\ If so, write it here}

%\titlerunning{Short form of title}        % if too long for running head

\author{Ting Wei Meng         \and
		Lok Ming Lui
}

%\authorrunning{Short form of author list} % if too long for running head

\institute{T.W. Meng \at
              Department of Mathematics, The Chinese University of Hong Kong, Shatin, Hong Kong \\
              %Tel.: +123-45-678910\\
              %Fax: +123-45-678910\\
              \email{twmeng@math.cuhk.edu.hk}           %  \\
%             \emph{Present address:} of F. Author  %  if needed
           \and
           L.M. Lui (Corresponding author) \at
              Room 207, Lady Shaw Building, Department of Mathematics, The Chinese University of Hong Kong, Shatin, Hong Kong.
              \email{lmlui@math.cuhk.edu.hk}
}

\date{Received: date / Accepted: date}
% The correct dates will be entered by the editor

\maketitle

\begin{abstract}
Quasi-conformal (QC) theory is an important topic in complex analysis, which studies geometric patterns of deformations between shapes. Recently, computational QC geometry has been developed and has made significant contributions to medical imaging, computer graphics and computer vision. Existing computational QC theories and algorithms have been built on triangulation structures. In practical situations, many 3D acquisition techniques often produce 3D point cloud (PC) data of the object, which does not contain connectivity information. It calls for a need to develop computational QC theories on PCs. In this paper, we introduce the concept of computational QC geometry on PCs. We define PC quasi-conformal (PCQC) maps and their associated PC Beltrami coefficients (PCBCs). The PCBC is analogous to the Beltrami differential in the continuous setting.  Theoretically, we show that the PCBC converges to its continuous counterpart as the density of the PC tends to zero. We also theoretically and numerically validate the ability of PCBCs to measure local geometric distortions of PC deformations. With these concepts, many existing QC based algorithms for geometry processing and shape analysis can be easily extended to PC data.

\keywords{quasi-conformal \and Beltrami coefficient \and point cloud.}

\noindent {\bf AMS Subject Classification: }{Primary 52C26, 65D18, 65E05; Secondary 53B20, 53B21}

%\PACS{PACS code1 \and PACS code2 \and more}
% \subclass{MSC code1 \and MSC code2 \and more}
\end{abstract}

\section{Introduction}
\label{intro}
Quasi-conformal (QC) theory was firstly proposed by Ahlfors \cite{Ahlfors1} in 1953. Since then, it has become an important topic in complex analysis \cite{QC5}, \cite{QC4}, \cite{QC3}, \cite{QC2}. Applications have been found in various areas in mathematics and physics, including differential equations \cite{pde1}, \cite{pde2,pde3,pde4}, topology \cite{pde3}, complex dynamics \cite{complexdynamics}, physical simulation \cite{physics} and function theory \cite{pde3}.

Recently, computational QC geometry has been developed and different algorithms have been proposed to approximate QC maps on triangular meshes in a discrete setting. A discrete QC map is considered as an orientation preserving homeomorphism between meshes, which is piecewise linear on each triangular face. The Beltrami coefficient (BC) associated to a given discrete QC map can be computed. The discrete BC is a complex-valued (piecewise constant) function defined on each triangular face. According to the QC theories, the discrete BC measures the angular distortions of each triangular face under the QC map. Hence, the local geometric distortions under the discrete QC map can be captured by the discrete BC. Besides, given a discrete BC, its associated discrete QC map can be efficiently reconstructed by solving elliptic PDEs. Over the last few years, computational QC geometry has been successfully applied in medical imaging, computer graphics and computer vision, to solve important problems, such as surface registration \cite{LuiregInten}, \cite{LuiTMap}, \cite{Luihg}, \cite{LuiBHFHP}, \cite{LuiBHF}, \cite{LuiregNd}, \cite{Weiface}, shape analysis \cite{ShapeAnalysis2}, \cite{ShapeAnalysisNd}, \cite{ShapeAnalysisYamabe}, \cite{ShapeAnalysis1}, texture mapping \cite{LuiBeltramirepresentation}, video compression \cite{LuiBeltramirepresentation}, geometric modeling \cite{modeling} and others \cite{Daripa}, \cite{LuiCompression}, \cite{Mastin2}, \cite{inpaint}.

Computational QC theories and related algorithms have been built on triangulation structures. In practical situation, many 3D image acquisition techniques often produce point cloud (PC) data of the geometric object. However, it is still unclear whether existing QC theories can be extended to PCs. The challenges are two folded. Firstly, unlike a triangular mesh, there is no angle structure defined on a PC. Existing discrete QC theories are mainly related to the angular distortions under the deformation. It poses difficulties in defining
conformality of a PC deformation. Secondly, a general unstructured PC does not have connectivity information. Thus, the conventional definition of discrete QC maps as orientation preserving piecewise linear homeomorphisms is no longer valid for PCs. To the best of our knowledge, computational QC geometry on PCs has not been studied before. This motivates us to develop computational QC theories on PCs, so that existing QC based algorithms for geometry processing and shape analysis can be extended to PC data.

In this paper, we introduce the concept of computational QC geometry on PCs. We first give the definition of PC quasi-conformal (PCQC) maps and their corresponding PC Beltrami coefficients (PCBCs). The PCBC is analogous to the Beltrami differential in the continuous setting. Our main focus in this work is to study the relationship between PCQC maps and PCBCs. Theoretically, we show that the PCBC converges to its continuous counterpart as the density of PC tends to zero.
We also theoretically and numerically examine the ability of PCBCs to capture local geometric distortions of PC deformations.

The proposed theories of computational QC geometry on PCs provide us with a tool to study and control geometric patterns of deformations between PC shapes. With these concepts, many existing QC based algorithms for geometry processing and shape analysis, such as PC registration, data compression and shape recognition/classification, can be easily extended to PCs.

The rest of the paper is organized as follows. 
In Section 2, previous works closely related to this paper are reviewed. In Section 3, we give a brief introduction about conformal map and quasi-conformal map. In section 4, QC theory on point cloud is built. We define the PCQC map between two PCs approximating Riemann surfaces in $\mathbb{R}^2$ or $\mathbb{R}^3$. The PCBC associated to the PCQC map is then defined. The relationship between the PCBC and its continuous counterpart is also theoretically studied. We then explore the ability of PCBCs to capture local geometric distortions of PC deformations, including the change of angles within a neighborhood and the change of local covariance matrices. In Section 5, we show some experimental results on synthetic and real data to verify our propositions as described in Section 4.

\section{Related work}
For data with triangulation structures, different approaches have been proposed to compute QC maps from their associated BCs, which include the minimization of least-square Beltrami energy \cite{Zorin}, Quasi-Yamabe Flow \cite{LuiQuasiYamabe}, Beltrami Holomorphic Flow (BHF) \cite{LuiBHF}, discrete Beltrami Flow \cite{DBFHG}, Linear Beltrami Solver (LBS) \cite{LuiTMap}, QCMC \cite{LuiQCMC} and FLASH \cite{Luiflash,LuiflashD}. In this paper, we will extend some of the above ideas to PC data.

Although many works have been done to compute QC maps on mesh structures, computational QC theories on PCs have not yet been studied. Nevertheless, some works on PC parameterizations and registrations have been recently proposed. Below we list some of these works, which are closely related to ours.

Registration is an important process in various fields, such as computer vision and medical imaging. Its goal is to find a meaningful map between two corresponding domains. Several algorithms for the registration between PCs have been previously proposed. For an overview of this topic, we refer readers to the survey \cite{audette2000algorithmic}, \cite{tam2013registration}, \cite{van2011survey}.
These algorithms can mainly be divided into two categories. The first category involves solving some optimization models for all points of the PCs.
The most popular algorithm in this category is the Iterative Closest Point (ICP) method\cite{besl1992method}. Based on the ICP method, many other algorithms have been developed. For more details, we refer readers to \cite{pomerleau2013comparing,rusinkiewicz2001efficient}. The other category is called the feature-based registration model. The basic idea is to extract feature points, with which a registration map between PCs matching corresponding features can be obtained. The major tasks for these approaches include the extraction of feature points and the estimation of their correspondences, which are less related to our work.

Besides, the parameterization of PC data has also been widely studied. The main goal is to map a PC data onto a simple parameter domain, such as a compact 2D domain. The first PC parameterization algorithm has been developed by Floater and Reimers \cite{Floater1} in 2001. In that work, the authors proposed a "single patch" meshless parameterization algorithm by solving a linear system, which restricts each point as a linear combination of points in its neighborhood with some weight functions. Later, different weight functions have been developed \cite{Floater2,Floater3, Floater4, Floater5}. The algorithm has further been generalized to PC surfaces with a spherical topology \cite{Hormann1, Zwicker1}.
Besides, many mesh parameterization methods have been extended to PCs. These include: the Self Organizing Maps (SOM) approach \cite{Barhak1}, holomorphic 1-form method \cite{Guo1,Tewari1}, Multi-Dimensional Scaling (MDS) technique \cite{Miao1}, As-Rigid-As-Possible (ARAP) method \cite{Zhang1}, and Periodic Global Parameterization (PGP) \cite{Li1,Li2}. In addition, Zwicker et al. \cite{Zwicker2} proposed to obtain parameterization through energy minimization. Wang et al. \cite{Wang1} also developed a simple parameterization algorithm by projecting and unfolding. More recently, Liang et al. \cite{Hongkai1,Hongkai2} proposed to obtain spherical parameterizations of PCs using the Moving Least Square method.
To obtain a meaningful PC parameterization, accurate measures of geometric distortions under the parameterization are necessary. In this work, our goal is to develop computational QC theories on PCs, which study local geometric distortions under PC deformations.

\section{Mathematical background}
In this section, we introduce some basic concepts about conformal and QC maps. These two kinds of maps have been widely used in geometry processing and computer vision. For more details, we refer readers to \cite{Gardiner}.

A conformal map is a diffeomorphism between two Riemann surfaces that satisfies Cauchy-Riemann equation. It preserves the local angle structure and maps infinitesimal circle to infinitesimal circle. Riemann mapping theorem states that any simply-connected compact open Riemann surface can be conformally mapped to a unit disk. Furthermore, this conformal parameterization is unique up to a Mobi\"us transformation.

A generalization of the conformal map is called the quasi-conformal (QC) map. An orientation-preserving diffeomorphism $f:\mathbb{C}\to \mathbb{C}$ is quasi-conformal if $\|\mu(f)\|_\infty$ $ < 1$, where $\mu(f)$ is called the Beltrami coefficient(BC) of $f=u+iv$ defined by
\begin{equation}
\mu(f) = \frac{\partial_{\bar{z}}f}{\partial_z f} =
\frac{(u_x-v_y) + i(v_x + u_y)}{(u_x+v_y)+i(v_x-u_y)}
\end{equation}
Given a feasible BC function $\mu$, a unique QC map $f$ can be computed by solving the Beltrami equation $\partial_{\bar{z}} f = \mu \partial_z f$. A QC map is conformal if and only if its BC is zero at any point. Hence, a conformal map is a special case of QC maps.

Given a feasible BC, its corresponding QC map always exists and is unique if $0,1,\infty$ are fixed. Therefore, the set of QC maps and the set of feasible BCs has one-to-one relationship up to normalization.

Moreover, BC itself can captures the geometric information of a QC map $f$. Near any point $p$, a QC map $f$ can be locally linearized as $f(z)-f(p) = \partial_z f|_p(dz + \mu(p) d\bar{z})$, so that $f$ transforms circles to ellipses in infinitesimal sense. The dilation is given by $\frac{1+|\mu(p)|}{1-|\mu(p)|}$. The stretch direction is also controlled by $\mu(p)$, as illustrated in Figure \ref{fig:QCstretch}.

\begin{figure}[h!]
\centering
\includegraphics[height=1.35in]{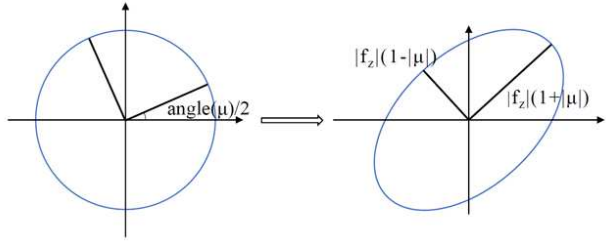}
\caption{Quasi-conformal deformation in the infinitesimal sense\label{fig:QCstretch}}
\end{figure}

When two maps $f$ and $g$ are given, we have a formula for the BC of the composition map $f\circ g^{-1}$:
\begin{equation}
\mu(f\circ g^{-1})\circ g = \frac{\mu(f)-\mu(g)}{1 - \mu(f)\overline{\mu(g)}}\cdot \frac{g_z}{\overline{g_z}}
\end{equation}

From the above formula, it is easy to observe that BC is not affected by left composition of a conformal map. However, BC is rotated under right composition of a conformal map. More precisely, if $f$ is conformal, then $\mu(f\circ g) = \mu(g)$, and $\mu(g\circ f) = \mu(g)\overline{f_z}/f_z$.

The definition of QC maps can also be generalized to Riemann surfaces. In this case, Beltrami differentials have to be used, instead of Beltrami coefficients. A Beltrami differential $\mu(z) \frac{\overline{dz}}{dz}$ is defined on $S_1$ by assigning complex-valued $L_\infty$ function $\mu_\alpha$ to each chart $(U_{\alpha},\phi_{\alpha})$ such that
\begin{equation*}
\mu_{\alpha}(z_{\alpha})\frac{\overline{dz_{\alpha}}}{dz_{\alpha}} = \mu_{\beta}(z_{\beta})\frac{\overline{dz_{\beta}}}{dz_{\beta}},
\end{equation*}
on $U_\alpha \cap U_\beta$ where $(U_{\beta},\phi_{\beta})$ is another arbitrary chart of $S_1$. A map $f:S_1\to S_2$ between two Riemann surfaces $S_1$ and $S_2$ is said to be a QC map associated to $\mu(z) \frac{\overline{dz}}{dz}$, if $f$ restricted to each chart $U_{\alpha}$ is a QC map with BC $\mu_\alpha$. Therefore, QC theories on 2D complex plane can be naturally generalized to Riemann surfaces.

\section{Model Setting}
In this section, we develop the concept of computational QC geometry on PC structures. We define PC quasi-conformal (PCQC) maps and its associated PC Beltrami coefficients (PCBCs), which capture local geometric distortions under PCQC maps. We first build QC theories on 2D PCs. The extension of the developed theories to 3D PCs sampled from Riemann surfaces embedded in $\mathbb{R}^3$ is then discussed.

Inspired by continuous QC theories, our definitions of QC geometry on PCs rely on the approximation of partial derivatives. The approximation of partial derivatives on PCs have been widely studied and various methods have been proposed recently. The most common approach is done by minimizing the least square error between sampled values and a linear combination of some base functions. In particular, this method is called the Moving Least Square method (MLS), if the base functions are chosen to be a set of polynomials. MLS will be used in this paper.
Before the discussion of our proposed QC model, we give a brief introduction about the general setting of PCs and the MLS method. For more details, we refer the readers to \cite{Wendland1,Mirzaei1,Mirzaei2}.

For any PC $\mathcal{P}$, we can write it as an ordered set $\mathcal{P} = \{p_1,p_2,...,p_N\}\subset \mathbb{R}^2$. $\mathcal{P}$ can also be represented by its matrix form $P=[p_1,p_2,...,p_N]^T\in M_{N\times 3}(\mathbb{R})$. Similarly, a PC function $f:\mathcal{P}\to \mathbb{R}$ can be identified with a matrix $F=[f(p_1),f(p_2),...,f(p_N)]^T\in \mathbb{R}^N$. In this paper, we always use the uppercase letter for the matrix form of a PC function represented by the corresponding lowercase letter, when it is not specified.

Suppose $\mathcal{P}$ is sampled from a domain $\mathcal{D}$ in $\mathbb{R}^2$ and $\bold{f}: \mathcal{P}\to \mathbb{R}$ is a function on $\mathcal{P}$. For each point $x\in \mathcal{D}$, one can calculate a polynomial $q_x$ to approximate $f$ near $x$ by using MLS. To simplify calculation process, we just consider quadratic polynomial approximation.
Let $w: [0,+\infty) \to \mathbb{R}$ be a weight function, compactly supported in $[0, 1)$, positive on $[0,1/2)$, and with even extension in $C^3(\mathbb{R})$. Define maps $q,q_1,q_2:\mathcal{D}\to \mathbb{R}^6$ by $q(x_1,x_2)=[1,x_1,x_2,x_1^2,x_1x_2,x_2^2]^T$, $q_1(x_1,x_2)=\partial_1 q= [0,1,0,2x_1,x_2,0]^T$, $q_2(x_1,x_2)=\partial_2 q = [0,0,1,0,x_1,2x_2]^T$. Let $\delta = C_\delta h_{\mathcal{P},\mathcal{D}}$ be the neighborhood radius parameter which controls the size of influenced neighborhood, where
$$
C_\delta = \frac{128(1+\sin\theta)^2}{3\sin^2\theta}
$$

Then MLS computes a local approximation $q_x(y) = c_x^Tq(y)$ by minimizing the weighted least square error
$$
c_x = argmin_c \, \sum_{i=1}^N w\left(\frac{\|x-p_i\|}{\delta}\right)\left(c^Tq(p_i)-f_i\right)^2
$$
The solution $c_x$ has a closed formula, $c_x = (Q^TW(x)Q)^{-1}Q^TW(x)F$, where $Q$ and $F$ are the matrix forms of $q$ and $f$ respectively, $W(x)$ is a diagonal matrix whose diagonal element $W_{ii} = w(\|x-p_i\|/\delta)$. Here, nonsigularity of matrix $Q^TW(x)Q$ is required, which will be assumed for all PCs in this paper.

Denote $A_x = (Q^TW(x)Q)^{-1}Q^TW(x)$, then $q_{x}(y) = q(y)^TA_xF$, and one can calculate partial derivatives of $q_{x}$, $\partial_jq_{x}(y) = q_j(y)^TA_xF$, which are called diffuse derivatives.

Globally, MLS gives a function $\hat{f}(x) = q(x)^TA_xF$ defined on $\mathcal{D}$, which approximates the map $f$. Define $\Phi_i(x) = (q(x)^TA_x)_i$, then $\hat{f} = \sum_{i=1}^N f_i\Phi_i(x)$ is a linear combination of $\Phi_i$. Here, the shape functions $\Phi_i$ only depend on PC data. Then one can compute partial derivatives of $\hat{f}$ and get $\partial_j \hat{f} = \sum_{i=1}^N f_i\partial_j\Phi_i(x)$, $j=1,2$, which are called standard derivatives.

In order to analyze error of MLS approximation, there are some requirements for the shape of domain and also the distribution of PCs, which are given in the following definition \cite{Wendland1}.

\begin{defn} \label{Def:feasible pc}
A domain $\mathcal{D}\subseteq \mathbb{R}^2$ is said to satisfy an interior cone condition with parameter $r>0$ and $\theta\in (0,\pi/2)$ if for each $x\in \mathcal{D}$ there exists a unit vector $d(x)$ such that $C(x,d,\theta, r)\subseteq \mathcal{D}$, where
$$
C(x,d,\theta, r) = \{x+ty: y\in \mathbb{S}^1, \,y^Td(x)\geq \cos \theta, \,0\leq t\leq r\}
$$
Let $\mathcal{P}$ be a point cloud sampled from a simply connected compact open 2-manifold $D$ where $D$ can be either a domain in $\mathbb{R}^2$ or a Riemann surface in $\mathbb{R}^3$. Fill distance $h_{\mathcal{P},D}$ and separation distance $q_{\mathcal{P}}$ are defined by
\begin{align*}
&h_{\mathcal{P},D} = \sup_{x\in D} \min_{p\in \mathcal{P}} \|x-p\|\\
&q_{\mathcal{P}} = \frac{1}{2}\min_{p_1,p_2\in \mathcal{P}\atop p_1\neq p_2} \|p_1-p_2\|
\end{align*}
$\mathcal{P}$ is said to be quasi-uniform with positive constant $c_{qu}$ if $q_{\mathcal{P}} \leq h_{\mathcal{P},D} \leq c_{qu}q_{\mathcal{P}}$.
\end{defn}

There are some error analysis for MLS, and we will use those described in \cite{Mirzaei2,Mirzaei1} here. In their paper, they defined a semi-norm $\|\cdot\|_{C^k(\mathcal{D})}$ for any real function $g\in C^m(\mathcal{D};\mathbb{R})$ and any complex function $f=u+iv$ where $u,v \in C^m(\mathcal{D};\mathbb{R})$, and $k\leq m$, as follows.
$$
\|g\|_{C^k(\mathcal{D})} = \max_{|\alpha|=k} \|D^\alpha g\|_{L_\infty(\mathcal{D})}\ \ \mathrm{and\ \ }
\|f\|_{C^k(\mathcal{D})} = \max\{\|u\|_{C^k(\mathcal{D})},\|v\|_{C^k(\mathcal{D})}\}
$$
For a simple case when $k=1$, $\|g\|_{C^1(\mathcal{D})} = \max\{\|\partial_1 g\|_{L_\infty(\mathcal{D})}, \|\partial_2 g\|_{L_\infty(\mathcal{D})}\}$.

The error of MLS approximation is bounded using semi-norm of $f$ and fill distance of the PC.
When domain $\mathcal{D}$ and quasi-uniform constant $c_{qu}$ are fixed, there exists one constant $C_{MLS}$ such that for arbitrary map $f\in C^3(\mathcal{D})$, and any PC $\mathcal{P}$ sampled from $\mathcal{D}$ satisfying properties in Definition \ref{Def:feasible pc} with fill distance $h<h_0 := 2r/C_\delta$,
\begin{align*}
&|f(x) - \hat{f}(x)| \leq C_{MLS} \|f\|_{C^3(\mathcal{D}^*)}h^3\\
&|\partial_i f(x) - \partial_i \hat{f}(x)| \leq C_{MLS} \|f\|_{C^3(\mathcal{D}^*)}h^2\\
&|\partial_i f(x) - \partial_i q_{x}(x)| \leq C_{MLS} \|f\|_{C^3(\mathcal{D}^*)}h^2
\end{align*}
where $\mathcal{D}^*$ is the closure of $\cup_{x\in \mathcal{D}} B(x,2r)$.

\subsection{Quasi-conformal maps between point clouds on complex domains}
In this part, we only consider the maps defined on an arbitrary compact domain $\mathcal{D}\subseteq \mathbb{R}^2$ satisfying interior cone condition. Basic definitions of PC quasi-conformal(PCQC) maps and PC Beltrami coefficients(PCBCs), as well as their relationships, will be discussed.

\begin{defn}
A map $\bold{f}$ is called a point cloud map if it is defined on a point cloud $\mathcal{P}$ by assigning each point $p$ a vector $f(p)$ in either $\mathbb{R}^2$ or $\mathbb{R}^3$.
When $\mathcal{P}$ and $\bold{f}(\mathcal{P})$ are both in $\mathbb{R}^2$, $\bold{f}$ is called
a parameterization map of $\mathcal{P}$ if it is a injective map and $\bold{f}(\mathcal{P})\subseteq \mathcal{D}$.
\end{defn}

Similar to continuous QC theory, diffuse or standard PCBC is defined according to the Beltrami's equation, using the diffuse or standard derivative approximation respectively. Explicit formulas are given below.

\begin{defn} \label{Def:mu pc}
Given a point cloud $\mathcal{P} =\{p_i\in \mathbb{R}^2: i=1,...,N\}$ and a target point cloud function $\bold{f}:\mathcal{P} \to \mathbb{R}^2$, where $\bold{f}=(\bold{u},\bold{v})^T$, diffuse Beltrami coefficient $\tilde{\mu}:\mathcal{D}\to \mathbb{C}$ is defined by
$$
\tilde{\mu}(x) = \frac{
\begin{bmatrix}
q_1(x)^T & q_2(x)^T
\end{bmatrix}
\begin{bmatrix}
A_x & iA_x\\
iA_x & -A_x
\end{bmatrix}
\begin{bmatrix}
U\\V
\end{bmatrix}
}{\begin{bmatrix}
q_1(x)^T & q_2(x)^T
\end{bmatrix}
\begin{bmatrix}
A_x & iA_x\\
-iA_x & A_x
\end{bmatrix}
\begin{bmatrix}
U\\V
\end{bmatrix}
}
$$
Standard Beltrami coefficient $\hat{\mu}:\mathcal{D}\to \mathbb{C}$ is defined by
$$
\hat{\mu}(x) = \frac{
\begin{bmatrix}
q_1(x)^T & q_2(x)^T
\end{bmatrix}
\begin{bmatrix}
A_x & iA_x\\
iA_x & -A_x
\end{bmatrix}
\begin{bmatrix}
U\\V
\end{bmatrix}
+
\begin{bmatrix}
q(x)^T & q(x)^T
\end{bmatrix}
\begin{bmatrix}
\partial_1 A_x & i\partial_1 A_x\\
i\partial_2 A_x & -\partial_2 A_x
\end{bmatrix}
\begin{bmatrix}
U\\V
\end{bmatrix}
}{\begin{bmatrix}
q_1(x)^T & q_2(x)^T
\end{bmatrix}
\begin{bmatrix}
A_x & iA_x\\
-iA_x & A_x
\end{bmatrix}
\begin{bmatrix}
U\\V
\end{bmatrix}
+
\begin{bmatrix}
q(x)^T & q(x)^T
\end{bmatrix}
\begin{bmatrix}
\partial_1 A_x & i\partial_1 A_x\\
-i\partial_2 A_x & \partial_2 A_x
\end{bmatrix}
\begin{bmatrix}
U\\V
\end{bmatrix}
}
$$
where for $j=1,2$,
\begin{align*}
\partial_j A_x = &-(P^TW(x)P)^{-1}P^T(\partial_jW(x))P(P^TW(x)P)^{-1}P^TW(x)\\
&+ (P^TW(x)P)^{-1}P^T\partial_jW(x)
\end{align*}

Discrete diffuse Beltrami coefficient \bm{$\tilde{\mu}$} is defined by \bm{$\tilde{\mu}$} $ = [\tilde{\mu}(p_1),\tilde{\mu}(p_2), ..., $ $\tilde{\mu}(p_N)]^T$.
Similarly, discrete standard Beltrami coefficient \bm{$\hat{\mu}$} is defined by \bm{$\hat{\mu}$} $ = [\hat{\mu}(p_1), \hat{\mu}(p_2),$ $...,\hat{\mu}(p_N)]^T$.
\end{defn}

We remark that, in our practical implementation, the discrete diffuse PCBC is often used because of its simplicity to calculate. On the other hand, the standard PCBC is exactly the BC of the global approximation function $\hat{f}$ in the continuous setting.

With the definition of PCBCs, PCQC maps can be easily defined as follows.

\begin{defn}
A point cloud map $\bold{f}$ is called diffuse point cloud quasi-conformal map if \bm{$\tilde{\mu}}$ is well-defined and $|\bm{\tilde{\mu}}|<1$ on each point in $\mathcal{P}$.

$\bold{f}$ is called standard point cloud quasi-conformal map if \bm{$\hat{\mu}$} is well-defined and $|\bm{\hat{\mu}}|<1$ on each point in $\mathcal{P}$.
\end{defn}

As we will discuss later, the PCBC measures the local geometric distortion under the PCQC map, which is analogous to the continuous Beltrami coefficient. In fact, the PCQC has a close relationship with its continuous counterpart, up to some error controlled by the fill distance of the PC data.

In order to analyze the error, we assume that $\mathcal{D}$ is a simply connected compact domain in $\mathbb{R}^2$ satisfying interior cone condition in Definition \ref{Def:feasible pc}, and $f:\mathcal{D}\to \mathbb{R}^2$ is a QC map in $C^3(\mathcal{D})$ with BC $\mu(f)$. Any PC $\mathcal{P}$ sampled from $\mathcal{D}$ is assumed to be a feasible PC, i.e. it is quasi-uniform with fixed constant $c_{qu}>0$, and its fill distance $h \leq h_0$.

\begin{prop} \label{Prop:muerr}
Let $\mathcal{P}$ be a point cloud sampled from $\mathcal{D}$ with fill distance $h$, and $\bold{f}$ be the point cloud map corresponding to $f$. Then there exist some constants $C_1(f)$ and $C_2(f)$, such that if $h \leq C_1(f)$, both diffuse Beltrami coefficient $\tilde{\mu}$ and standard Beltrami coefficient $\hat{\mu}$ have error bound
\begin{align*}
& |\mu(x) - \tilde{\mu}(x)| \leq C_2(f)h^2\\
& |\mu(x) - \hat{\mu}(x)| \leq C_2(f)h^2
\end{align*}
\end{prop}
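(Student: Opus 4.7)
My first move is to unpack the matrix-product formulas in Definition~\ref{Def:mu pc}. Direct expansion shows that the numerator of $\tilde{\mu}(x)$ equals $(\tilde u_{x_1}(x) - \tilde v_{x_2}(x)) + i(\tilde v_{x_1}(x) + \tilde u_{x_2}(x))$ and its denominator equals $(\tilde u_{x_1}(x) + \tilde v_{x_2}(x)) + i(\tilde v_{x_1}(x) - \tilde u_{x_2}(x))$, where $\tilde u_{x_j}(x) := q_j(x)^T A_x U$ is exactly the diffuse partial derivative of $u$ at $x$ and similarly for $v$. In other words, $\tilde{\mu}$ is nothing but the continuous formula $\mu(f) = \tfrac{(u_x - v_y) + i(v_x + u_y)}{(u_x + v_y) + i(v_x - u_y)}$ evaluated on the diffuse MLS partial derivatives of $u,v$. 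The analogous bookkeeping, using the product rule $\partial_j(q(x)^T A_x F) = q_j(x)^T A_x F + q(x)^T \partial_j A_x F$, identifies $\hat{\mu}$ with the Beltrami coefficient of the global approximation $\hat f = \hat u + i\hat v$, i.e., with the same formula applied to the standard derivatives $\partial_j \hat f(x)$.

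With this reformulation, write $\mu = N/D$ for $N = 2\partial_{\bar z}f$ and $D = 2\partial_z f$, and $\tilde{\mu} = \tilde N/\tilde D$, where $\tilde N,\tilde D$ are the matching combinations of diffuse derivatives. The algebraic identity
\[
\mu(x) - \tilde\mu(x) \;=\; \frac{(N - \tilde N)\,\tilde D \;+\; \tilde N\,(\tilde D - D)}{D\,\tilde D}
\]
reduces everything to upper bounds on the numerator differences and a lower bound on $|D\tilde D|$. For the numerator differences, the third MLS estimate applied componentwise to $u$ and $v$ gives $|N - \tilde N|,\,|D - \tilde D|\le c_1(f)\,h^2$, where $c_1(f)$ is a fixed multiple of $C_{MLS}\|f\|_{C^3(\mathcal{D}^*)}$.

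The key difficulty, and the reason a smallness hypothesis $h \le C_1(f)$ is needed, is the denominator lower bound. Since $f$ is a QC $C^3$-diffeomorphism of the compact domain $\mathcal{D}$, its Jacobian $J_f = |\partial_z f|^2 - |\partial_{\bar z}f|^2$ is continuous and strictly positive, so it has a positive minimum on $\mathcal{D}$; combined with $|\partial_z f|^2 \ge J_f$, this yields a uniform positive lower bound $m_f := 2\inf_{x\in\mathcal{D}}|\partial_z f(x)| > 0$ on $|D|$. Choosing $C_1(f) := \sqrt{m_f/(2c_1(f))}$ then forces $|\tilde D(x)| \ge |D(x)| - c_1(f)h^2 \ge m_f/2$ whenever $h\le C_1(f)$, while $|\tilde N|$ is automatically majorised by $\|N\|_\infty + c_1(f)h^2 \le M_f$. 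Substituting these bounds into the identity delivers $|\mu(x) - \tilde{\mu}(x)| \le C_2(f)\,h^2$ for an explicit $C_2(f)$ depending only on $C_{MLS}$, $\|f\|_{C^3(\mathcal{D}^*)}$, and $m_f$.

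The estimate for $\hat{\mu}$ follows by the same argument verbatim, with the only change being to invoke the second MLS bound $|\partial_i f - \partial_i \hat f| \le C_{MLS}\|f\|_{C^3(\mathcal{D}^*)}h^2$ in place of the third one to control $|N - \hat N|$ and $|D - \hat D|$; the lower bound on $|D|$, the uniform upper bound on $\hat N$, and the algebraic decomposition are identical. I expect the only real obstacle to be securing the uniform positive lower bound $m_f$ on $|\partial_z f|$; once compactness and the QC hypothesis are used to pin it down, what remains is a routine quotient-rule error analysis driven by the pre-existing MLS estimates.
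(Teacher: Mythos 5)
Your proposal is correct and follows essentially the same route as the paper's proof: identify $\tilde{\mu}$ (resp.\ $\hat{\mu}$) as the continuous Beltrami formula evaluated on the diffuse (resp.\ standard) MLS derivatives, import the $O(h^2)$ MLS derivative bounds into the numerator and denominator, extract a uniform positive lower bound on $|\partial_z f|$ from compactness and quasi-conformality, and pick $C_1(f)$ so the perturbed denominator stays bounded away from zero before applying the quotient-difference estimate. The only cosmetic difference is the exact form of the algebraic decomposition of $\mu - \tilde{\mu}$, which is equivalent to the paper's.
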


\begin{proof}
Let $g:\mathcal{D}\to \mathbb{R}$ be an arbitrary function. We use the notation $\partial_j q_{g,p}(x)$ to denote diffuse derivatives approximating partial derivatives of $g$ near an arbitrary point $p$. Let $\tilde{E}_{g,j}(x) = \partial_j q_{g,x}(x) - \partial_j g(x)$ be the error of diffuse derivatives.

As stated above, for each point $x\in \mathcal{D}$,
\begin{align*}
&|\tilde{E}_{g,j}(x)| \leq C_{MLS}  \|g\|_{C^3(\mathcal{D}^*)}h^2\\
\end{align*}
From Definition \ref{Def:mu pc}, the following equation can be obtained.
\begin{align*}
\tilde{\mu}(x) &= \frac{(\partial_1 q_{u,x} - \partial_2 q_{v,x}) + i(\partial_1 q_{v,x} + \partial_2 q_{u,x})}{(\partial_1 q_{u,x} + \partial_2 q_{v,x}) + i(\partial_1 q_{v,x} - \partial_2 q_{u,x})}\\
\end{align*}
Then by comparing $\tilde{\mu}$ with $\mu(f)$, one can derive
\begin{align*}
\tilde{\mu}
&= \frac{(\partial_1 u - \partial_2 v) + i(\partial_1 v + \partial_2 u) + e_1}{(\partial_1 u + \partial_2 v) + i(\partial_1 v - \partial_2 u) + e_2}
\end{align*}
where $|e_j|\leq 2\sqrt{2} C_{MLS} \|f\|_{C^3(\mathcal{D}^*)} h^2$, $j=1,2$.

According to the definition of Beltrami coefficient, $\partial_z f \neq 0$ everywhere in the compact domain $\mathcal{D}$, which implies that $|\partial_z f|$ has a positive lower bound, denoted by $L$. Without loss of generality, assume $\|f\|_{C^3(\mathcal{D}^*)}>0$. Let $C_1(f) = \sqrt{\frac{L}{4\sqrt{2}C_{MLS}\|f\|_{C^3(\mathcal{D}^*)}}}$, then $C_1(f)>0$, and $h\leq C_1(f)$ implies $|e_2|\leq L/2$.
Then,
$$
|\tilde{\mu} - \mu(f)| \leq \frac{|e_1||\partial_z f| + |e_2||\partial_{\bar{z}} f|}{L^2/2} \leq C_2(f)h^2
$$
where $C_2(f) =  \frac{32C_{MLS}\|f\|_{C^3(\mathcal{D}^*)} \|f\|_{C^1(\mathcal{D})}}{L^2}$.

With similar argument, one can prove similar result for standard Beltrami coefficient.
\qed
\end{proof}

\bigbreak
From now on, we assume the fill distance of PC satisfies the condition $h\leq C_1(f)$. From the above proposition, we see that both diffuse and standard PCBCs give good approximations of the continuous BC $\mu(f)$.

Next, we will study theoretically the ability of PCBCs to capture local geometric distortions under PCQC maps. In fact, PCBCs capture local changes of many statistical and geometric properties on PCs under the PCQC maps, such as the local angle structure and covariance matrices.

\begin{prop}\label{Prop:angle}
Let $\{\mathcal{P}_n\}$ be a sequence of point clouds sampled from $\mathcal{D}$ with fill distance $h_n$ goes to $0$, and $\bold{f}_n:\mathcal{P}_n\to \mathbb{R}^2$ be the point cloud map corresponding to $f$. If sup-norm of discrete standard or diffuse Beltrami coefficient converges to $0$, then
$$
\max_{p_0\in \mathcal{P}_n \atop p_1,p_2\in S_n(p_0)\backslash \{p_0\}} \left|\alpha(\bold{f}_n(p_1),\bold{f}_n(p_0),\bold{f}_n(p_2)) - \alpha(p_1,p_0,p_2) \right| \leq Ch_n
$$
where $S_n(p_0) = B_{\delta_n}(p_0)\cap \mathcal{P}_n$, and $\alpha(p_1,p_0,p_2)$ denotes the angle between $p_1-p_0$ and $p_2-p_0$.
\end{prop}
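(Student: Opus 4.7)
The plan is to show first, from the hypothesis, that $\mu(f)\equiv 0$ on $\mathcal{D}$ (so $f$ is in fact conformal), and then to control the angle distortion by $O(h_n)$ via a second-order Taylor expansion that crucially exploits quasi-uniformity.

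For the conformality step, I would fix an arbitrary $x \in \mathcal{D}$ and pick a nearest point $p_n \in \mathcal{P}_n$, so $|x - p_n| \le h_n$. The triangle inequality gives
\[
|\mu(f)(x)| \le |\mu(f)(x) - \mu(f)(p_n)| + |\mu(f)(p_n) - \tilde{\mu}_n(p_n)| + |\tilde{\mu}_n(p_n)|,
\]
in which the first term tends to $0$ by uniform continuity of $\mu(f) \in C^2(\mathcal{D})$, the second is at most $C_2(f) h_n^2$ by Proposition \ref{Prop:muerr}, and the third is bounded by the discrete sup-norm of $\bm{\tilde{\mu}}_n$, which vanishes by hypothesis. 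Since the left-hand side is independent of $n$, we conclude $\mu(f)(x) = 0$ for every $x \in \mathcal{D}$; the identical argument handles $\hat{\mu}$. Hence $f$ is conformal, its Jacobian factors as $J_f(p) = \lambda(p) R(p)$ with $R(p) \in SO(2)$ and $\lambda(p) > 0$, and by compactness of $\mathcal{D}$ we have $\lambda_{\min}:= \inf_{\mathcal{D}} \lambda > 0$.

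For the quantitative angle estimate, fix $p_0 \in \mathcal{P}_n$ and $p_1, p_2 \in S_n(p_0) \setminus \{p_0\}$, and set $v_i = p_i - p_0$, $w_i = J_f(p_0) v_i$. Quasi-uniformity of $\mathcal{P}_n$ gives $2h_n/c_{qu} \le |v_i| \le C_\delta h_n$, hence $|w_i| \ge 2\lambda_{\min} h_n / c_{qu}$. A second-order Taylor expansion of $f$ around $p_0$ yields
\[
f(p_i) - f(p_0) = w_i + r_i, \qquad |r_i| \le \tfrac{1}{2}\|f\|_{C^2(\mathcal{D})} |v_i|^2,
\]
so $|r_i|/|w_i|$ is uniformly $O(h_n)$, giving an $O(h_n)$ perturbation of the unit vectors $w_i/|w_i|$. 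The triangle inequality for angles between unit vectors in $\mathbb{R}^2$ then transfers this into an $O(h_n)$ bound on the angle difference $|\alpha(\bold{f}_n(p_1), \bold{f}_n(p_0), \bold{f}_n(p_2)) - \alpha^\star|$, where $\alpha^\star$ denotes the angle between $w_1$ and $w_2$. Finally, $J_f(p_0)$ is a scaled rotation and therefore preserves angles, so $\alpha^\star = \alpha(p_1, p_0, p_2)$, and the claim follows uniformly with a constant depending only on $f$, $c_{qu}$, and $C_\delta$.

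The main obstacle I foresee is the conformality step: the hypothesis is a purely discrete statement on $\bm{\tilde{\mu}}_n$ over PC points that themselves vary with $n$, whereas what I need --- $\mu(f) \equiv 0$ --- is a fixed pointwise assertion about the continuous map $f$. Interleaving the $O(h_n^2)$ error of Proposition \ref{Prop:muerr}, the uniform continuity of $\mu(f)$, and the density of $\{\mathcal{P}_n\}$ is essential here; without conformality, the best one could hope for is an angle bound of the form $\|\mu(f)\|_\infty + h_n$, never the desired $O(h_n)$. A secondary subtlety is keeping the Taylor constant uniform: without the lower bound $|v_i| \ge 2h_n/c_{qu}$ from quasi-uniformity, the quadratic remainder could dominate the linear term for $p_i$ very close to $p_0$, destroying the unit-vector perturbation estimate.
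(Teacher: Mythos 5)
Your proposal is correct and follows essentially the same two-step route as the paper: first deduce $\mu(f)\equiv 0$ by combining the nearest-point/triangle-inequality argument with Proposition \ref{Prop:muerr} and the vanishing discrete sup-norm, then linearize $f$ at $p_0$ with a quadratic Taylor remainder, lower-bound the smallest singular value of $\nabla f(p_0)$, and convert the relative perturbation into an $O(h_n)$ angle bound via the arcsine estimate. One minor remark: the lower bound $|v_i|\ge 2h_n/c_{qu}$ from quasi-uniformity is not actually needed, since $|r_i|/|w_i|\le \tfrac{1}{2}\|f\|_{C^2(\mathcal{D})}\lambda_{\min}^{-1}|v_i|\le \tfrac{1}{2}\|f\|_{C^2(\mathcal{D})}\lambda_{\min}^{-1}C_\delta h_n$ holds uniformly regardless of how small $|v_i|$ is (the quadratic remainder becomes relatively smaller, not larger, as $p_i\to p_0$), which is exactly how the paper's proof proceeds.
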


\begin{proof}

Let $\mu$ be Beltrami coefficient of $f$, and $\bm{\tilde{\mu}}_n$ be discrete diffuse Beltrami coefficient of $\bold{f}_n$. Fix integer $n$, let $x$ be an arbitrary point in domain, $p\in \mathcal{P}_n$ be the nearest point to $x$, then we have $\|x-p\|\leq h_n$. According to Taylor's theorem, there exists a point $\xi \in \mathcal{D}$ such that
$$
|\mu(x)-\mu(p)|=|\nabla \mu(\xi)(x-p)|\leq 2\|\mu\|_{C^1(\mathcal{D})}\|x-p\|\leq 2\|\mu\|_{C^1(\mathcal{D})}h_n
$$
From Property \ref{Prop:muerr}, $|\bm{\tilde{\mu}}_n(p) - \mu(p)| \leq C_0h_n^2$ for some constant $C_0$.
Then $|\mu(x)| \leq |\bm{\tilde{\mu}}_n(p)|+2\|\mu\|_{C^1(\mathcal{D})} h_n + C_0 h_n^2$, which goes to zero when $n$ goes to infinity. Hence $f$ is a conformal map.

Fix integer $n$. Let $\alpha_1 = \alpha(p_1,p_0,p_2)$, $\alpha_2 = \alpha(\bold{f}_n(p_1),\bold{f}_n(p_0),\bold{f}_n(p_2))$, where $p_1, p_2\in S_n(p_0)\backslash \{p_0\}$. Then, $\bold{f}_n(p_j) - \bold{f}_n(p_0) = \nabla f(p_0)(p_j-p_0)+e_j$, where $\|e_j\|\leq 2\sqrt{2}\|f\|_{C^2(\mathcal{D})}\|p_j-p_0\|^2$, $j=1,2$. Therefore,
\begin{align*}
|\alpha_1 - \alpha_2|
&\leq \arcsin\left(\frac{\|e_1\|}{\|\nabla f(p_0)(p_1-p_0)\|}\right) + \arcsin\left(\frac{\|e_2\|}{\|\nabla f(p_0)(p_2-p_0)\|}\right)\\
&\leq \frac{\pi}{2}\cdot \sqrt{\frac{1+|\mu(p_0)|}{\det(\nabla f(p_0))(1-|\mu(p_0)|)}}\left(\frac{\|e_1\|}{\|p_1-p_0\|} + \frac{\|e_2\|}{\|p_2-p_0\|}\right)\\
&\leq \pi \|f\|_{C^2(\mathcal{D})} \sqrt{\frac{2(1+\|\mu\|_\infty)}{L_1(1-\|\mu\|_\infty)}}(\|p_1-p_0\|+\|p_2-p_0\|) \leq C_1h_n
\end{align*}
where $L_1 = \min \det(\nabla f)>0$, and
$C_1 = 2\sqrt{2}\pi C_\delta\|f\|_{C^2(\mathcal{D})}\sqrt{\frac{1+\|\mu\|_\infty}{L_1(1-\|\mu\|_\infty)}}$ is a constant independent of $p_0$, $p_1$ and $p_2$. Therefore,
$$
\max_{p_0\in \mathcal{P}_n \atop p_1,p_2\in S_n(p_0)\backslash \{p_0\}} \left|\alpha(\bold{f}_n(p_1),\bold{f}_n(p_0),\bold{f}_n(p_2)) - \alpha(p_1,p_0,p_2) \right| \leq C_1 h_n
$$
Similarly, one can prove the statement for standard Beltrami coefficient case.
\qed
\end{proof}

The above proposition is analogous to the fact that angles are preserved under a conformal map. When the PCBC is small enough, the angle structure is preserved under the PCQC deformation, up to a tolerable error controlled by the fill distance.

On the other hand, statistical approaches are often used to analyze PC structures. For instance, the correlation and directional information of a PC structure are crucial, which can be used in many famous algorithms for determining features and shape analysis. The PCBC captures information about local changes in statistical properties of a PC structure under a PCQC map.

\begin{prop} \label{Prop:Pca} %pca, arbitrary M_1
Let $f$ be a map defined on $\mathcal{D}$, and $\mathcal{P}$ be a point cloud sampled from $\mathcal{D}$ with fill distance $h$. Let $p\in \mathcal{P}$ such that $B_\delta(p)\subseteq \mathcal{D}$, and $S = B_\delta(p) \cap \mathcal{P}$. Let covariance matrix $M_1$ for $S$ has eigenvalues $\lambda_1, \lambda_2$, and covariance matrix $M_2$ for $f(S)$ has eigenvalues $\lambda_3, \lambda_4$, where $\lambda_1 \geq \lambda_2>0$, and $\lambda_3\geq \lambda_4>0$. Denote $v_2 = [v_{21},v_{22}]^T$ to be the unit eigenvector of $M_1$ with respect to $\lambda_2$. Let $\tilde{\mu}(p)$ be discrete diffuse Beltrami coefficient defined on $p$.
\\
(a). For $h$ small enough, there exists some constant $C$ depending on $f$ and independent of $p$ such that
$$
\left|\frac{\lambda_3}{\lambda_4} - \left(\frac{1+|T|}{1-|T|}\right)^2\right| \leq Ch
$$
where
$$
T = \frac{(\sqrt{\lambda_1}+\sqrt{\lambda_2})\tilde{\mu}(p) -(\sqrt{\lambda_1}-\sqrt{\lambda_2})(v_{21}+iv_{22})^2}{\sqrt{\lambda_1}+\sqrt{\lambda_2} -(\sqrt{\lambda_1}-\sqrt{\lambda_2})(v_{21}-iv_{22})^2\tilde{\mu}(p)}
$$
\\
(b). Assume
$\zeta = \mu(p) - \frac{\sqrt{\lambda_1}-\sqrt{\lambda_2}}{\sqrt{\lambda_1}+\sqrt{\lambda_2}}(v_{21}+iv_{22})^2
$, and $h=o(|\zeta|)$.
Denote $\theta = angle(T)/2$, and $u_0 = [\cos(\theta), \sin(\theta)]^T$.
Let vector $w_0$ be the unit eigenvector of $M_2$ with respect to $\lambda_3$, such that $w_0\cdot (\nabla f(p) (M_1 + \sqrt{\lambda_1\lambda_2}I) u_0) \geq 0$. Then when $h$ is small enough, there exist constants $C_1$, $C_2$ depending on $f$ and independent of $p$ such that
$$
\left|w_0 - \frac{\nabla f(p)(M_1 + \sqrt{\lambda_1\lambda_2}I) u_0}{\|\nabla f(p)(M_1 + \sqrt{\lambda_1\lambda_2}I) u_0\|}\right| \leq C_1h + C_2\frac{h}{|\zeta|}
$$
\end{prop}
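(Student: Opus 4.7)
The strategy is to replace $f$ near $p$ by its tangent map, reducing the local covariance analysis to an exact $2\times 2$ singular-value computation, and then to control the approximation errors using the Taylor remainder and Proposition~\ref{Prop:muerr}. Taylor-expanding $f$ about $p$ gives $f(p_i) - f(p) = \nabla f(p)(p_i - p) + R_i$ with $\|R_i\| \lesssim \|f\|_{C^2(\mathcal{D})}\,\delta^2$ for every $p_i \in S$. Substituting into the covariance formula and collecting terms yields
\begin{equation*}
M_2 \;=\; \nabla f(p)\, M_1\, \nabla f(p)^T + E, \qquad \|E\| = O(\delta^3) = O(h^3),
\end{equation*}
since the cross terms involving $R_i$ are $O(\delta^3)$ and the remainder--remainder contributions are $O(\delta^4)$. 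Set $L := \nabla f(p)\,M_1^{1/2}$ and $\tilde M_2 := LL^T$; the eigenvalues of $\tilde M_2$ are the squared singular values of $L$, and its eigenvector for the larger eigenvalue equals $Lv/\|Lv\|$ where $v$ is the right singular vector of $L$ for maximal stretch.

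Writing $L$ in complex form $z \mapsto \alpha z + \beta \bar z$, the singular values of $L$ are $|\alpha|(1\pm |\mu_L|)$ with $\mu_L := \beta/\alpha$, and the maximum-stretch input direction is $e^{i\arg\mu_L/2}$. A direct computation shows $M_1^{1/2}$ has Beltrami coefficient $\tfrac{\sqrt{\lambda_1}-\sqrt{\lambda_2}}{\sqrt{\lambda_1}+\sqrt{\lambda_2}}\,e^{2i\phi}$ when $v_1 = e^{i\phi}$. Using the identity $e^{2i\phi} = -(v_{21}+iv_{22})^2$ together with the BC composition formula of Section~3 applied to $L = \nabla f(p) \circ M_1^{1/2}$, one obtains
\begin{equation*}
\mu_L \;=\; \frac{(\sqrt{\lambda_1}+\sqrt{\lambda_2})\mu(p) - (\sqrt{\lambda_1}-\sqrt{\lambda_2})(v_{21}+iv_{22})^2}{(\sqrt{\lambda_1}+\sqrt{\lambda_2}) - (\sqrt{\lambda_1}-\sqrt{\lambda_2})(v_{21}-iv_{22})^2\,\mu(p)},
\end{equation*}
which is exactly $T$ with $\tilde\mu(p)$ replaced by $\mu(p)$. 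Proposition~\ref{Prop:muerr} then gives $|\mu_L - T| = O(h^2)$; moreover the numerator of $\mu_L$ equals $(\sqrt{\lambda_1}+\sqrt{\lambda_2})\zeta$, so $|\mu_L|$ and $|\zeta|$ are comparable up to constants depending on the uniform bound $\|\mu\|_\infty<1$.

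For part~(a), $\tilde\lambda_3/\tilde\lambda_4 = ((1+|\mu_L|)/(1-|\mu_L|))^2$ is immediate from the singular-value formula; combining the $O(h)$ relative error induced by $E$ (using $\tilde\lambda_3,\tilde\lambda_4 = \Theta(h^2)$) with the $O(h^2)$ correction from replacing $\mu_L$ by $T$ yields the stated bound. For part~(b), the algebraic identity $M_1 + \sqrt{\lambda_1\lambda_2}\,I = (\sqrt{\lambda_1}+\sqrt{\lambda_2})\,M_1^{1/2}$ (direct spectral computation) rewrites the target approximation as $w_0^* := Lu_0/\|Lu_0\|$. Let $\tilde w_0 := Lv/\|Lv\|$ with $v = e^{i\arg\mu_L/2}$ denote the exact eigenvector of $\tilde M_2$, and decompose $\|w_0 - w_0^*\| \le \|w_0 - \tilde w_0\| + \|\tilde w_0 - w_0^*\|$. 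The first summand is controlled by the Davis--Kahan theorem, giving $\|E\|/(\tilde\lambda_3 - \tilde\lambda_4) = O(h^3)/O(h^2|\zeta|) = O(h/|\zeta|)$ because the gap $\tilde\lambda_3 - \tilde\lambda_4 = 4|\alpha|^2|\mu_L|$ scales like $h^2|\zeta|$; the second is bounded by $|\arg T - \arg\mu_L| \lesssim |T - \mu_L|/|\mu_L| = O(h^2/|\zeta|)$. The two estimates combine into the stated form $C_1 h + C_2 h/|\zeta|$. The main obstacle is exactly this eigenvector step: when $|\zeta|$ is small the eigenvalue gap degenerates, and the hypothesis $h = o(|\zeta|)$ is precisely what makes the Davis--Kahan bound meaningful and stabilizes the sign convention that picks $u_0$ out of the two candidates $\pm(\cos\theta,\sin\theta)^T$.
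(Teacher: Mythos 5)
Your proposal is correct and follows the same skeleton as the paper's proof: linearize $f$ over the neighborhood to get $M_2 = \nabla f(p)M_1\nabla f(p)^T + E$ with $\|E\|=O(h^3)$, apply eigenvalue/eigenvector perturbation with spectral gap $\Theta(h^2|\zeta|)$, use the Beltrami composition formula to identify the exact dilation and stretch direction of the linearized map, and invoke Proposition~\ref{Prop:muerr} to replace $\mu(p)$ by $\tilde\mu(p)$ at cost $O(h^2)$. The one genuine difference is your choice of normalizing factor: you take the symmetric square root $M_1^{1/2}$, whereas the paper uses the Cholesky factor $U$ of $M_1=U^TU$ and the map $g=(U^T)^{-1}(\cdot-p_0)+p_0$. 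Your choice is cleaner for part (b): since $M_1^{1/2}$ is symmetric, $g_z$ is real, so the unitary factor $g_z/\overline{g_z}$ in the composition formula disappears, and the identity $M_1+\sqrt{\lambda_1\lambda_2}\,I=(\sqrt{\lambda_1}+\sqrt{\lambda_2})M_1^{1/2}$ makes the appearance of $M_1+\sqrt{\lambda_1\lambda_2}\,I$ in the statement immediate; the paper instead has to carry the phase of $g_z$ and derive the same factor through an explicit computation with the matrix $A=\frac{1}{(a+c)^2+b^2}\bigl[\begin{smallmatrix}a+c&b\\-b&a+c\end{smallmatrix}\bigr]$. Two points you assert without justification, which the paper proves and which are genuinely needed for uniformity in $p$: (i) the two-sided bounds $\lambda_2\geq C_2h^2$ and $\lambda_1/\lambda_2\leq C_0$, obtained from quasi-uniformity via a packing argument (these underlie your claims $\tilde\lambda_3,\tilde\lambda_4=\Theta(h^2)$ and keep $|\sigma|$, hence $|T|$ and $|\mu_L|$, uniformly away from $1$ so the denominators in part (a) do not degenerate); and (ii) the step from $|T-\mu_L|=O(h^2)$ to the Möbius-Lipschitz estimate, which again uses these uniform bounds. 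Neither is a gap in the approach, but both should be supplied for the constants to be independent of $p$. Finally, note that your combined bound $O(h/|\zeta|)$ in part (b) is in fact at least as strong as the stated $C_1h+C_2h/|\zeta|$ since $|\zeta|$ is bounded above, so the conclusion follows.
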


The proof of this proposition is long and is attached in the appendix. This result also holds for standard PCBCs. In particular, when PC $\mathcal{P}$ is regular, we have the following corollary.

\begin{cor} \label{Prop:PcaId}
With the above conditions, further assume that $\lambda_1 = \lambda_2$.
\\
(a). For $h$ small enough, there exists some constant $C$ independent of $\mathcal{P}$ such that
$$
\left|\frac{\lambda_3}{\lambda_4} - \left(\frac{1+|\tilde{\mu}(p)|}{1-|\tilde{\mu}(p)|}\right)^2\right| \leq Ch
$$
\\
(b). Assume $\mu(p)\neq 0$. Denote $u_0 = [\cos(\theta), \sin(\theta)]^T$, where $\theta = angle(\tilde{\mu}(p))/2$. Let $w_0$ be the unit eigenvector of $M_2$ with respect to $\lambda_3$, such that $w_0\cdot (\nabla f(p)u_0)$ $\geq 0$. Then when $h$ is small enough, there exists constant $C$ depending on $f$ and independent of $p$ such that
$$
\left|w_0 - \frac{\nabla f(p) u_0}{\|\nabla f(p) u_0\|}\right| \leq Ch
$$
\end{cor}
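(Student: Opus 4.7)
The plan is to derive both statements by specializing Proposition \ref{Prop:Pca} to the isotropic case $\lambda_1 = \lambda_2$, in which the complicated expressions for $T$ and for the normalized target direction collapse.

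For part (a), I would first observe that $\sqrt{\lambda_1} - \sqrt{\lambda_2} = 0$ kills the second terms in both the numerator and denominator of $T$, leaving $T = \tilde{\mu}(p)$. Substituting this into the bound $\left|\lambda_3/\lambda_4 - ((1+|T|)/(1-|T|))^2\right| \leq Ch$ from Proposition \ref{Prop:Pca}(a) yields the corollary's inequality immediately, with the same constant $C$.

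For part (b), the key observation is that when the eigenvalues coincide, $M_1 = \lambda_1 I$, so $M_1 + \sqrt{\lambda_1\lambda_2}\, I = 2\lambda_1 I$. Hence $\nabla f(p)(M_1 + \sqrt{\lambda_1\lambda_2}I)u_0 = 2\lambda_1 \nabla f(p) u_0$, and the positive factor $2\lambda_1$ cancels under normalization, reducing the target direction to $\nabla f(p) u_0/\|\nabla f(p) u_0\|$. The vector $u_0$ of the corollary matches that of the proposition because $T = \tilde{\mu}(p)$ forces $\theta = \mathrm{angle}(\tilde{\mu}(p))/2$, and the orientation convention $w_0 \cdot \nabla f(p) u_0 \geq 0$ agrees with the original $w_0\cdot\nabla f(p)(M_1+\sqrt{\lambda_1\lambda_2}I)u_0\geq 0$ since the two differ only by the positive factor $2\lambda_1$. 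Finally, with $\zeta = \mu(p) - 0 = \mu(p)$ nonzero by assumption, the hypothesis $h = o(|\zeta|)$ holds for $h$ sufficiently small, and the term $C_2 h/|\zeta| = (C_2/|\mu(p)|)h$ can be combined with $C_1 h$ into a single $Ch$. The main obstacle is purely bookkeeping --- verifying that each simplification is compatible with the sign and normalization conventions of Proposition \ref{Prop:Pca} --- and no new analytic input beyond that proposition is required.
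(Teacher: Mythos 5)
Your proposal is correct and matches the paper's (implicit) proof exactly: the paper states this corollary without a separate argument, as an immediate specialization of Proposition \ref{Prop:Pca} in which $\sqrt{\lambda_1}-\sqrt{\lambda_2}=0$ forces $T=\tilde{\mu}(p)$, $M_1+\sqrt{\lambda_1\lambda_2}\,I=2\lambda_1 I$, and $\zeta=\mu(p)$, precisely as you argue. The only caveat is that absorbing $C_2h/|\zeta|$ into $Ch$ yields $C=C_1+C_2/|\mu(p)|$, which still depends on $p$ through $|\mu(p)|$ despite the corollary's claim of $p$-independence; this is a blemish in the statement itself rather than a gap in your derivation.
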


\bigbreak

Corollary 8a is analogous to the fact that the dilation of the infinitesimal ellipse from a deformed infinitesimal circle under the QC map can be determined from the BC (denoted by $\mu$). More specifically, the dilation $K$ is given by: $K = (1+|\mu|)/(1-|\mu|)$.

Furthermore, according to QC theories, $\nabla f(p)u_0$ is approximately parallel to $u_1$, where $angle(u_1)=angle(-\tilde{\mu}(f^{-1})\circ f(p))/2$ and $\tilde{\mu}(f^{-1})$ is the standard or diffuse PCBC of $f^{-1}|_{f(\mathcal{P})}$. Therefore, the eigenvectors of $M_2$ in Corollary \ref{Prop:PcaId} can be approximated by PCBCs of $f^{-1}$ and $f$. By a similar argument, one can also use $M_1$ and PCBCs of $f$ and $f^{-1}$ to approximate the eigenvectors of $M_2$ in Proposition \ref{Prop:Pca}, without using $\nabla f(p)$.

The above propositions tell us PCBCs can be used to control local geometric distortions under the PCQC maps. For example, an optimal PCQC map that minimizes the local geometric distortions for PC registration can be obtained by minimizing its PCBC. Shape analysis of PC structures can also be done using PCBCs.

\subsection{Quasi-conformal maps between point cloud surfaces}

In the last subsection, we develop PCQC theories on $\mathbb{R}^2$. The theories can be extended to point cloud surfaces, that is, PC data sampled from 2D Riemann surfaces embedded in $\mathbb{R}^3$. Given a map between two PC surfaces, we will define its PC Beltrami representation(PCBR). The PCBR measures local geometric changes under its associated PCQC map.

To begin with, we have to impose some requirements on the PC surface. A Riemann surface can be linearized near any point in infinitesimal sense. Analogous to that fact, we require that a small neighborhood near any point of the PC can always be injectively projected to a plane.

\begin{defn} \label{Def:pcsurf}
Let $\mathcal{P}$ be a point cloud sampled from a simply connected Riemann surface $\mathcal{S}$ in $\mathbb{R}^3$ with fill distance $h$. For each point $p\in \mathcal{P}$, its $d_p-$ neighborhood is defined as $\mathcal{N}_{p,d_p} = B_{d_p}(p)\cap \mathcal{P}$. A point cloud $\mathcal{P}$ is called $d-$point cloud surface if for each point $p\in \mathcal{P}$, there exists a unit vector $v_p$ such that $|(x-p,v_p)|<q_{\mathcal{P}}$ for all $x\in \mathcal{N}_{p,d(p)}$ where $d$ is a positive point cloud function.
\end{defn}

\bigbreak

To define PCBR for a map between two PC surfaces, we use the similar idea as in the continuous case. According to QC theories, the Beltrami differential of a map between two Riemann surfaces is defined based on the projected map between the coordinate charts of the surfaces. In other words, the two Riemann surfaces are conformally parameterized onto simply-connected patches in $\mathbb{R}^2$. The Beltrami differential is defined by the Beltrami coefficient of the projected map between the conformal parameter domains. Thus, in order to define PCBR, we need to give a definition of PC conformal parameterization.

We first define PCBCs from a planar PC to a PC surface, which will then be used to define PC conformal parameterization.
Let $f=(u,v,w): \mathcal{D}\to \mathcal{S}$ be a QC map from a simply connected compact domain $\mathcal{D}\subseteq \mathbb{R}^2$ to a Riemann surface $\mathcal{S}\subseteq \mathbb{R}^3$, then it can be locally approximated near any point $x$ by function $q_{x}(y)=q(y)^TA_xF$, where $F=[U,V,W]$. From this approximation, we can define the PCBCs of function $f$ as follows.

\begin{defn}\label{Def:3dmu}
Given a point cloud $\mathcal{P}\subseteq \mathcal{D}$, a target point cloud function $\bold{f}=(\bold{u},\bold{v},\bold{w})^T: \mathcal{P} \to \mathbb{R}^3$, diffuse Beltami coefficient $\tilde{\mu}:\mathcal{D}\to \mathbb{C}$ is defined by
$$
\tilde{\mu}(x) = \frac{
\begin{bmatrix}
q_1(x)^T & q_2(x)^T
\end{bmatrix}
\begin{bmatrix}
B & iB\\
iB & -B
\end{bmatrix}
\begin{bmatrix}
q_1(x)\\q_2(x)
\end{bmatrix}
}
{q_1^TBq_1 + q_2^TBq_2 + 2\sqrt{q_1^TB(q_1q_2^T-q_2q_1^T)Bq_2}}
$$
where $B = A_xFF^TA_x^T$. Standard Beltrami coefficient $\hat{\mu}: \mathcal{D} \to \mathbb{C}$ is defined by
$$
\hat{\mu}(x)=
\frac{
\begin{bmatrix}
q_1^T & q_2^T
\end{bmatrix}
\begin{bmatrix}
B & iB\\
iB & -B
\end{bmatrix}
\begin{bmatrix}
q_1\\q_2
\end{bmatrix}
+
q^T
\begin{bmatrix}
\partial_1A_x & \partial_2A_x
\end{bmatrix}
\begin{bmatrix}
FF^T & iFF^T\\
iFF^T & -FF^T
\end{bmatrix}
\begin{bmatrix}
(\partial_1A_x)^T\\
(\partial_2A_x)^T
\end{bmatrix}q
}
{q_1^TBq_1 + q_2^TBq_2 + D_1 + 2\sqrt{q_1^TB(q_1q_2^T-q_2q_1^T)Bq_2 + D_2}}
$$
where
\begin{align*}
&D_1 = q^T((\partial_1A_x)FF^T(\partial_1A_x)^T+ (\partial_2A_x)FF^T(\partial_2A_x)^T)q\\
&D_2 = q^T(\partial_1A_x)FF^T((\partial_1A_x)^Tqq^T(\partial_2A_x) - (\partial_2A_x)^Tqq^T(\partial_1A_x))FF^T(\partial_2A_x)^Tq
\end{align*}

Discrete diffuse Beltrami coefficient \bm{$\tilde{\mu}$} is defined by \bm{$\tilde{\mu}$} $ = [\tilde{\mu}(p_1),\tilde{\mu}(p_2),...,$ $\tilde{\mu}(p_N)]^T$.
Similarly, discrete standard Beltrami coefficient \bm{$\hat{\mu}$} is defined by \bm{$\hat{\mu}$} $ = [\hat{\mu}(p_1), \hat{\mu}(p_2),...,\hat{\mu}(p_N)]^T$.
\end{defn}

The PCBCs defined above can measure how close a PC map from a 2D domain to a Riemann surface is to a conformal map. A PC conformal parameterization can be defined as follows.

\begin{defn}\label{Def:3dparam}
Let $\mathcal{P}$ be a point cloud surface, and $\bold{f}: \mathcal{P}\to \mathcal{D}$ be a injective point cloud map, where $\mathcal{D}$ is a simply connected compact domain in $\mathbb{R}^2$. Assume that $\mathcal{P}$ satisfies the interior cone condition. $\bold{f}^{-1}$ is called a parameterization of $\mathcal{P}$ if $\bold{f}(\mathcal{P})$ is a feasible point cloud in $\mathcal{D}$ and $\mathcal{P}$ is $d-$point cloud surface, where $d(p) = \max_{x\in \bold{f}^{-1}(S)}\|x-p\|$ and $S = B_{\delta}(\bold{f}(p))\cap \mathcal{P}$. $\bold{f}^{-1}$ is called a $e-$conformal parameterization if it is a parameterization and $\max\{\|\tilde{\mu}\|_\infty, \|\hat{\mu}\|_\infty\}$ $\leq e$, where $\tilde{\mu}$ and $\hat{\mu}$ are the diffuse and standard Beltrami coefficient of $\bold{f}^{-1}$.
\end{defn}

In the rest of this part, we assume that any PC is a PC surface sampled from a simply connected compact open Riemann surface, associated with a PC $e-$conformal parameterization map.

The PCBR associated to a PCQC map between two PC surfaces can now be defined.

\begin{defn}\label{Def:3dBelrep}
Let $\mathcal{P}_1$ and $\mathcal{P}_2$ be two point cloud surfaces with the same number of points, and $\bold{f}: \mathcal{P}_1\to \mathcal{P}_2$ be a bijective point cloud map. Let $\bm{\phi}_1$, $\bm{\phi}_2$ be two point cloud maps such that $\bm{\phi}_1^{-1}$ and $\bm{\phi}_2^{-1}$ are $e-$conformal parameterizations of $\mathcal{P}_1$ and $\mathcal{P}_2$ respectively. Then, the diffuse and standard PC Beltrami representation(PCBR) are defined by $\tilde{\mu}( \bm{\phi}_2\circ \bold{f} \circ \bm{\phi}_1^{-1})\circ \bm{\phi_1}$ and $\hat{\mu}( \bm{\phi}_2\circ \bold{f} \circ \bm{\phi}_1^{-1})\circ \bm{\phi_1}$, where $\tilde{\mu}(\cdot)$ and $\hat{\mu}(\cdot)$ denote the diffuse and standard Beltrami coefficient.
\end{defn}

In other words, the PCBR is defined by the PCBC of the projected map between the 2D conformal parameter domains. In practice, one can map each PC surface to the 2D domain using a $e-$conformal parameterization, on which the PCBR can be easily computed.

Next, we show that, under suitable conditions, a PC $e-$conformal parameterization is close to the actual conformal parameterization of a Riemann surface. With this observation, it follows that our proposed PCQC theories on PC surfaces are analogous to their continuous counterpart, up to an error controlled by fill distances of the PC surfaces.

\begin{prop}\label{Prop:3dconferr}
Let $\mathcal{S}\subseteq \mathbb{R}^3$ be a Riemann surface with global parameterization $\phi_0:\mathcal{D}\to \mathcal{S}$. Let $\mathcal{P}_n$ be a sequence of quasi-uniform point cloud surfaces sampled from $\mathcal{S}$ with $e_n-$conformal parameterization $\bm{\phi}_n$ and fill distance $h_{n,0}$ where $e_n$ and $h_{n,0}$ both converge to $0$, and $\mathcal{P}_n\subseteq \mathcal{P}_{n+1}$ for each $n$. Assume that $\lim_n\bm{\phi}_n^{-1}(x) = f(x)$ if $x\in \cup \mathcal{P}_n$. Further assume that $\bm{\phi}_n^{-1}(\mathcal{P}_n)$ and $f(\mathcal{P}_n)$ are feasible point clouds in $\mathcal{D}$ with fill distance $h_{n,1}$ and $h_{n,2}$, both of which converge to $0$. Let $\hat{\phi}_n$ be the MLS approximation of $\bm{\phi}_n$. Assume that $\partial_i\hat{\phi}_n$ is uniformly convergent for $i=1,2$, $\hat{\phi}_n$ converges to a smooth function $\phi$, and $\mu(\phi)$ is well-defined on $\mathcal{D}$. Then $\phi_0=\phi$ under suitable boundary condition.
\end{prop}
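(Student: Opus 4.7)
The plan is to show that the limit map $\phi$ is itself a conformal parameterization of $\mathcal{S}$ by $\mathcal{D}$, and then invoke uniqueness of the conformal parameterization (up to a M\"obius self-map of $\mathcal{D}$) together with the prescribed boundary condition to identify $\phi$ with $\phi_0$.

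First I would show $\mu(\phi)\equiv 0$ on $\mathcal{D}$, so that $\phi$ is conformal. By the same reasoning as in the remark following Definition \ref{Def:mu pc}, translated to the surface setting of Definition \ref{Def:3dmu}, the standard PCBC $\hat{\mu}_n$ of $\boldsymbol{\phi}_n$ is precisely the continuous Beltrami coefficient of its MLS approximation $\hat{\phi}_n$. Hence $\|\mu(\hat{\phi}_n)\|_\infty=\|\hat{\mu}_n\|_\infty\le e_n\to 0$. The uniform convergences $\hat{\phi}_n\to\phi$ and $\partial_i\hat{\phi}_n\to\partial_i\phi$, together with the assumption that $\mu(\phi)$ is well-defined on $\mathcal{D}$ (which keeps the Beltrami denominator bounded away from zero in the limit), yield $\mu(\hat{\phi}_n)\to\mu(\phi)$ uniformly, and therefore $\mu(\phi)\equiv 0$.

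Next I would show that $\phi$ is a homeomorphism from $\mathcal{D}$ onto $\mathcal{S}$. For $y\in\bigcup_n\mathcal{P}_n$ set $x_n:=\boldsymbol{\phi}_n^{-1}(y)\to f(y)=:x$. Because $\boldsymbol{\phi}_n(x_n)=y$ and the MLS error bound gives $|\hat{\phi}_n(x_n)-\boldsymbol{\phi}_n(x_n)|=O(h_{n,1}^3)$, we have $\hat{\phi}_n(x_n)\to y$. The uniform derivative bound on $\{\hat{\phi}_n\}$ yields equicontinuity, so $\hat{\phi}_n(x_n)-\hat{\phi}_n(x)\to 0$, while $\hat{\phi}_n(x)\to\phi(x)$ by uniform convergence. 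Thus $\phi(f(y))=y$ on the dense subset $\bigcup_n\mathcal{P}_n$ of $\mathcal{S}$, and continuity of $\phi$ together with compactness of $\mathcal{D}$ forces $\phi(\mathcal{D})=\mathcal{S}$. Injectivity follows from $\phi$ being conformal with non-degenerate Jacobian (ensured by well-definedness of $\mu(\phi)$), hence a local diffeomorphism that agrees with the bijection coming from $\phi\circ f=\mathrm{id}$ on a dense set, between two simply connected compact surfaces of the same topology.

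Finally, since $\phi_0$ and $\phi$ are both conformal parameterizations $\mathcal{D}\to\mathcal{S}$, the composition $\phi_0^{-1}\circ\phi$ is a conformal self-map of $\mathcal{D}$, i.e.\ a M\"obius transformation preserving $\mathcal{D}$; the prescribed boundary condition (e.g.\ matching three boundary points or an equivalent three-parameter normalization) pins this transformation to the identity, yielding $\phi=\phi_0$. The main obstacle is Step 2: interleaving the MLS error, the convergence $\boldsymbol{\phi}_n^{-1}(y)\to f(y)$, and the equicontinuity of $\{\hat{\phi}_n\}$ to obtain $\phi\circ f=\mathrm{id}$ on a dense set, and then upgrading bijectivity on this dense subset to a global homeomorphism. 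Step 1 is a routine passage to the limit, and Step 3 is a standard application of Riemann mapping uniqueness.
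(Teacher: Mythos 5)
Your overall strategy coincides with the paper's: establish $\phi\circ f=\mathrm{id}$ on the dense set $\bigcup_n\mathcal{P}_n$ to get $\phi(\mathcal{D})=\mathcal{S}$, pass the Beltrami coefficient to the limit via the uniform convergence of $\partial_i\hat{\phi}_n$ and $e_n\to 0$ to get $\mu(\phi)\equiv 0$, and conclude by uniqueness of the conformal parameterization under the boundary normalization (you merely do these in a different order, and you spell out the injectivity/Riemann-mapping step in more detail than the paper does). The one step that does not survive scrutiny is your claim that ``the MLS error bound gives $|\hat{\phi}_n(x_n)-\boldsymbol{\phi}_n(x_n)|=O(h_{n,1}^3)$.'' That bound, $|g-\hat{g}|\le C_{MLS}\|g\|_{C^3}h^3$, presupposes that the sampled data are the values of a $C^3$ function $g$ on $\mathcal{D}$ with controlled norm; here the data defining $\hat{\phi}_n$ is only the point cloud map $\boldsymbol{\phi}_n$, and the existence of a smooth function through these samples is essentially what the proposition is in the process of establishing, so invoking the $O(h^3)$ estimate is circular. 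The paper avoids this by using only the structure of the MLS shape functions: writing $\hat{\phi}_n(x_n)=\sum_i p_i\varphi_i(x_n)$ and using that constants are reproduced ($\sum_i\varphi_i\equiv 1$), that $\varphi_i(x_n)\ne 0$ only for points within distance $\delta_n$, and that the Lebesgue constant $\sum_i|\varphi_i(x_n)|$ is uniformly bounded, one gets $|\hat{\phi}_n(x_n)-p|=\bigl|\sum_i\varphi_i(x_n)(p_i-p)\bigr|\le C_1\delta_n\to 0$. This is only a first-order bound, but it is all that is needed, and substituting it for your $O(h^3)$ claim repairs the argument without changing anything else.
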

\begin{proof}
First, we prove that $\phi$ is a function from $\mathcal{D}$ to $S$. Let $p\in \mathcal{P}_n$, and $x_n = \bm{\phi}_n^{-1}(p)$, $x = \lim_{n} x_n = f(p)$, then
\begin{align*}
|\phi(x) - p|
&\leq |\phi(x) - \phi(x_n)| + |\phi(x_n) - \hat{\phi}_n(x_n)| + |\hat{\phi}_n(x_n) - p|\\
&\leq \sqrt{6}\|\phi\|_{C^1(\mathcal{D})}|x-x_n| + |\phi(x_n) - \hat{\phi}_n(x_n)| + |\hat{\phi}_n(x_n) - p|
\end{align*}
Without loss of generality, let $\varphi_i$ be the shape functions of $\bm{\phi}_n$, and denote $p=p_1$, then
$
|\hat{\phi}_n(x_n) - p|
= \left|\sum p_i\varphi_i(x_n) - p_1\right|
= \left|\sum \varphi_i(x_n)\cdot (p_i-p_1)\right|$.
Since $\varphi_i$ is only nonzero when $|x-p_i|\leq \delta_n$, then $|\hat{\phi}_n(x_n) - p|\leq \left(\sum|\varphi_i(x_n)|\right)\delta_n \leq C_1 \delta_n$, where $C_1$ is a constant independent of point cloud. Hence $\phi(f(p)) = p$ for arbitrary point $p\in \cup \mathcal{P}_n$. And by assumption, $f(\cup \mathcal{P}_n)$ is dense in $\mathcal{D}$, and $\phi$ is smooth, then $\phi(\mathcal{D})\subseteq \mathcal{S}$. On the other hand, since $h_{n,0}$ converges to $0$, we have $\cup \mathcal{P}_n$ dense in $\mathcal{S}$, hence $\phi(\mathcal{D})= \mathcal{S}$.
\\
Since $\partial_i\hat{\phi}_n$ is uniformly convergent, and $\hat{\phi}_n$ converges to a smooth function $\phi$, then $\partial_i \phi = \lim_n \partial_i\hat{\phi}_n$. By assumption, $\lim_n e_n=0$ and $\mu(\phi)$ is well-defined, then $0 = \lim_n\mu(\hat{\phi}_n) = \mu(\phi)$, and $\phi$ is diffeomorphism. By Riemann mapping theorem, $\phi_0$ is unique under suitable boundary condition, hence $\phi_0=\phi$.
\qed
\end{proof}

Suppose the PC $e-$conformal parameterizations are accurate enough, that is, they are close enough to the actual conformal parameterizations of the Riemann surfaces. In such a case, the following proposition states that the PCBR is close to the BC of $\phi_2^{-1}\circ f\circ \phi_1$, where $\phi_i$ is the actual conformal parameterization of Riemann surface $\mathcal{S}_i$ for $i=$1 or 2.

\begin{prop}\label{Prop:3dmuerr}
Let $f:\mathcal{S}_1\to \mathcal{S}_2$ be a quasi-conformal map, where Riemann surface $\mathcal{S}_j$ has global parameterization $\phi_j:\mathcal{D}\to \mathcal{S}_j$. Let $\mu_0 = \mu(\phi_2^{-1}\circ f\circ \phi_1)\circ \phi_1^{-1}$. Let $\mathcal{P}_1$ and $\mathcal{P}_2=f(\mathcal{P}_1)$ be two point clouds sampled from $\mathcal{S}_1$ and $\mathcal{S}_2$ with conformal parameterizations $\bm{\varphi}_1$ and $\bm{\varphi}_2$ respectively. Let $h_1$ be the fill distance of $\bm{\varphi}_1^{-1}(\mathcal{P}_1)$. Assume that $|\bm{\varphi}_i^{-1}(p)-\phi_i^{-1}(p)|\leq \epsilon$ for all $p\in \mathcal{P}_i$, $i=1,2$, and $\epsilon=o(h_1)$. Let $\tilde{\mu}$ be the diffuse Beltami representation of $f|_{\mathcal{P}_1}$. Then $\max_{p\in \mathcal{P}_1}|\tilde{\mu}(p) - \mu_0(p)| = O(h_1^2 + \frac{\epsilon}{h_1})$.
\\Similar result holds for standard Beltrami representation.
\end{prop}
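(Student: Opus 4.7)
My approach will be to compare the diffuse PCBR at $p\in\mathcal{P}_1$, which by Definition \ref{Def:3dBelrep} equals $\tilde\mu(\mathbf{g})(y_p)$ where $y_p := \bm{\varphi}_1^{-1}(p)\in\mathcal{D}$ and $\mathbf{g} := \bm{\varphi}_2^{-1}\circ f\circ\bm{\varphi}_1$ is the induced planar PC map on $\mathcal{Q}_1 := \bm{\varphi}_1^{-1}(\mathcal{P}_1)$, with the continuous reference $\mu_0(p) = \mu(g)(z_p)$ where $z_p:=\phi_1^{-1}(p)$ and $g := \phi_2^{-1}\circ f\circ\phi_1:\mathcal{D}\to\mathcal{D}$. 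I will insert two intermediate quantities and split the error by the triangle inequality into
\begin{equation*}
|\tilde\mu(p)-\mu_0(p)|\leq \underbrace{|\tilde\mu(\mathbf{g})(y_p)-\tilde\mu(g|_{\mathcal{Q}_1})(y_p)|}_{(\mathrm{A})} + \underbrace{|\tilde\mu(g|_{\mathcal{Q}_1})(y_p)-\mu(g)(y_p)|}_{(\mathrm{B})} + \underbrace{|\mu(g)(y_p)-\mu(g)(z_p)|}_{(\mathrm{C})},
\end{equation*}
where $g|_{\mathcal{Q}_1}$ denotes the discrete sampling of the smooth $g$ at the planar PC $\mathcal{Q}_1$.

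Terms (B) and (C) fall to tools already developed. For (B), the planar PC $\mathcal{Q}_1$ has fill distance $h_1$ and is feasible, and $g$ is a smooth QC map on $\mathcal{D}$, so Proposition \ref{Prop:muerr} yields (B) $= O(h_1^2)$. For (C), the smoothness of $f,\phi_1,\phi_2$ makes $\mu(g)\in C^1(\mathcal{D})$, so by the mean value inequality (C) $\leq \|\nabla \mu(g)\|_\infty |y_p-z_p|\leq \|\nabla\mu(g)\|_\infty \epsilon = O(\epsilon)$.

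The principal work, and the main obstacle, is term (A). Both PCBCs are built on the same PC $\mathcal{Q}_1$ and the same MLS machinery $(A_x, q_j)$, so they differ only in the sampled input data. A short calculation using the hypothesis shows $|\mathbf{g}(y_q)-g(y_q)|\leq |\bm{\varphi}_2^{-1}(f(q))-\phi_2^{-1}(f(q))| + |g(z_q)-g(y_q)|\leq \epsilon + \|\nabla g\|_\infty\epsilon = O(\epsilon)$ uniformly in $q\in\mathcal{P}_1$. Since the numerator and denominator in Definition \ref{Def:mu pc} are fixed linear functionals of the input vectors $U,V$ (the coefficients being the rows $q_j(x)^T A_x$), the task reduces to bounding the $\ell^1$-norm of these rows. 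Standard MLS theory, together with quasi-uniformity of $\mathcal{Q}_1$ and the fact that only $O(1)$ PC points lie in the influence ball $B_\delta(y_p)$ for $\delta=C_\delta h_1$, gives $\|q_j(y_p)^T A_{y_p}\|_1 = O(1/h_1)$. Consequently each of the numerator and denominator of $\tilde\mu(\mathbf{g})(y_p)$ deviates from its counterpart in $\tilde\mu(g|_{\mathcal{Q}_1})(y_p)$ by $O(\epsilon/h_1)$. The denominator itself is bounded below uniformly, because by Step (B) it differs from the strictly positive continuous $\partial_z g$ by $O(h_1^2)$ on compact $\mathcal{D}$ and the further $O(\epsilon/h_1)$ perturbation is $o(1)$ since $\epsilon = o(h_1)$. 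Dividing yields (A) $= O(\epsilon/h_1)$.

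Combining, $|\tilde\mu(p)-\mu_0(p)| = O(h_1^2) + O(\epsilon) + O(\epsilon/h_1) = O(h_1^2 + \epsilon/h_1)$, where $O(\epsilon)$ is absorbed into $O(\epsilon/h_1)$ for $h_1$ small. The parallel argument for the standard PCBR only differs in that the sensitivity rows include additional $q(x)^T\partial_j A_x$ terms; these are precisely the derivatives of the shape functions $\Phi_i$, whose $\ell^1$-norm over the $O(1)$ supporting points is likewise $O(1/h_1)$, so the same bound $O(h_1^2 + \epsilon/h_1)$ follows. The chief technical difficulty is the uniform-in-$p$ control $\|q_j(y_p)^T A_{y_p}\|_1 = O(1/h_1)$, which rests on quasi-uniformity of $\mathcal{Q}_1$ and a careful inspection of the closed-form $A_x = (Q^T W(x) Q)^{-1} Q^T W(x)$.
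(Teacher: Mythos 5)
Your proposal is correct and follows essentially the same route as the paper's proof: the identical three-term triangle inequality (with $\tilde{\sigma}_2$ playing the role of your $\tilde{\mu}(g|_{\mathcal{Q}_1})$), the same $O(\epsilon)$ bound on the perturbed sampled data, the same $\ell^1$-bound $\sum_j|a_{k,j}|=O(h_1^{-1})$ on the MLS coefficient rows, and the same lower bound on the denominator via $\min_{\mathcal{D}}|\partial_z(\phi_2^{-1}\circ f\circ\phi_1)|>0$. No substantive differences.
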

\begin{proof}
Fix $p\in \mathcal{P}_1$. Let $x_0 = \phi_1^{-1}(p)$ and $x_1 = \bm{\varphi}_1^{-1}(p)$. Denote $\mu_1 = \mu(\phi_2^{-1}\circ f\circ \phi_1)$. Let $\tilde{\sigma}_1$ and $\tilde{\sigma}_2$ be the diffuse Beltrami coefficient of $\bm{\varphi}_2^{-1}\circ f \circ \bm{\varphi}_1$ and $\phi_2^{-1}\circ f\circ \phi_1|_{\bm{\varphi}_1^{-1}(\mathcal{P}_1)}$. Then
\begin{align*}
&|\tilde{\mu}(p)-\mu_0(p)|\\
\leq\ &|\tilde{\sigma}_1(x_1) - \tilde{\sigma}_2(x_1)| + |\tilde{\sigma}_2(x_1) - \mu_1(x_1)| + |\mu_1(x_1) - \mu_1(x_0)|\\
\leq\ &|\tilde{\sigma}_1(x_1) - \tilde{\sigma}_2(x_1)| + C_0h_1^2 + 2\|\mu_1\|_{C^1(\mathcal{D})}\epsilon
\end{align*}
Consider MLS on $\bm{\varphi}_1^{-1}(\mathcal{P}_1)$, let $a_{k,j}(y) = (q_k(y)^TA_y)_j$, $k=1,2$. By definition,
$$
\tilde{\sigma}_k = \frac{\left(\sum_j u_{k,j}a_{1,j} - \sum_j v_{k,j}a_{2,j}\right) +i\left(\sum_j v_{k,j}a_{1,j} + \sum_j u_{k,j}a_{2,j}\right)}{\left(\sum_j u_{k,j}a_{1,j} + \sum_j v_{k,j}a_{2,j}\right) +i\left(\sum_j v_{k,j}a_{1,j} - \sum_j u_{k,j}a_{2,j}\right)}
$$
where $[u_{1,j},v_{1,j}]^T = \bm{\varphi}_2^{-1}(f(p_j))$, and $[u_{2,j},v_{2,j}]^T = \phi_2^{-1}\circ f\circ \phi_1\circ \bm{\varphi}_1^{-1}(p_j)$. Hence
\begin{align*}
&|[u_{1,j},v_{1,j}]^T - [u_{2,j},v_{2,j}]^T|\\
\leq\ &|\bm{\varphi}_2^{-1}(f(p_j))-\phi_2^{-1}(f(p_j))| + |\phi_2^{-1}\circ f(p_j) - \phi_2^{-1}\circ f\circ \phi_1(\bm{\varphi}_1^{-1}(p_j))|\\
\leq\ &\epsilon + 2\|\phi_2^{-1}\circ f\circ \phi_1\|_{C^1(\mathcal{D})}\epsilon=:C_1\epsilon
\end{align*}
By MLS theory, $\sum_j |a_{k,j}(y)| \leq C_2h_1^{-1}$. Then $
|\sum_j u_{1,j}a_{k,j} - \sum_j u_{2,j}a_{k,j}|
\leq C_1C_2\epsilon/h_1
$, and $|\sum_j v_{1,j}a_{k,j} - \sum_j v_{2,j}a_{k,j}|\leq C_1C_2\epsilon/h_1$. Moreover, from error analysis of MLS, $\left|[\sum_j u_{2,j}a_{k,j},\sum_j v_{2,j}a_{k,j}]^T - \partial_k(\phi_2^{-1}\circ f\circ \phi_1) \right|\leq C_3h^2$.
By similar argument as in proof of Proposition \ref{Prop:muerr},
\begin{align*}
|\tilde{\sigma}_1(x_1) - \tilde{\sigma}_2(x_1)|
\leq \frac{16C_1C_2\epsilon\left(|\partial_z(\phi_2^{-1}\circ f\circ \phi_1)| + |\partial_{\bar{z}}(\phi_2^{-1}\circ f\circ \phi_1)|\right)}{L^2h_1}=:\frac{C_4\epsilon}{h_1}
\end{align*}
where $L = \min_{\mathcal{D}} |\partial_z (\phi_2^{-1}\circ f\circ \phi_1)|>0$.
Therefore,
$$
\max_{p\in \mathcal{P}_1}|\tilde{\mu}(p)-\mu_0(p)|\leq C_5\frac{\epsilon}{h_1} + C_0h_1^2
$$
where $C_5 = C_4 + 2\|\mu_1\|_{C^1(\mathcal{D})}h_0$.
\qed
\end{proof}

In other words, suppose the PC map is the restriction to the PC of a continuous surface QC map $f$. Under certain conditions, the PCBR is approximately equal to the Beltrami representation of $f$ restricted to the PC, up to an error related to the fill distances. The above proposition tells us that we can approximate the continuous Beltrami representation by computing the PCBR. The accuracy of the approximation can be improved as denser PCs are used to approximate the Riemann surfaces.

With the above error analysis, one can easily derive propositions about the PCBR as in the 2D case. The proof is almost the same as in the 2D case. We will omit the details but describe the main idea of the proof.

\begin{prop}\label{Prop:3dangle}
Let $f:\mathcal{S}_1\to \mathcal{S}_2$ be a quasi-conformal map. Assume $\mathcal{S}_i$ has global parameterization $\phi_i$.
Let $\{\mathcal{P}_{n,1}\}$ be a sequence of point clouds sampled from $\mathcal{S}_1$, and $\bold{f}_n$ be the corresponding point cloud map on $\mathcal{P}_{n,1}$. Denote $\mathcal{P}_{n,2} = f(\mathcal{P}_{n,1})$.
Assume $\mathcal{P}_{n,i}$ has conformal parameterization $\bm{\varphi}_{n,i}$ and $|\bm{\varphi}_{n,i}^{-1}(x)-\phi_i^{-1}(x)|\leq \epsilon_n$ for all $x\in \mathcal{P}_{n,i}$, $i=1,2$. Let $\bm{\varphi}_{n,1}(\mathcal{P}_{n,1})$ be feasible point cloud with fill distance $h_n$, and $h_n^2+\epsilon_n/h_n$ converges to $0$. Assume the sup-norm of diffuse or standard Beltrami representation converges to $0$. Then
$$
\max_{p_0\in \mathcal{P}_{n,1} \atop p_1,p_2\in T_n(p_0)\backslash \{p_0\}} \left|\alpha(\bold{f}_n(p_1),\bold{f}_n(p_0),\bold{f}_n(p_2)) - \alpha(p_1,p_0,p_2) \right| = O(h_n)
$$
where $T_n(p_0) = \bm{\varphi}_{n,1}(B_{\delta_n}(\bm{\varphi}_{n,1}^{-1}(p_0)))\cap \mathcal{P}_{n,1}$ and $\delta_n = C_\delta h_n$, $\alpha(p_1,p_0,p_2)$ denotes the angle between $p_1-p_0$ and $p_2-p_0$.
\end{prop}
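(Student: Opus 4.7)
The plan is to reduce the statement to the planar angle-preservation result, Proposition \ref{Prop:angle}, by transferring both triples $(p_0,p_1,p_2)$ and $(f(p_0),f(p_1),f(p_2))$ to the 2D parameter domain $\mathcal{D}$ through the true conformal parameterizations $\phi_1,\phi_2$. The argument is organized so that each step either reuses an already-proven bound or invokes the fact that a conformal parameterization preserves Euclidean angles between tangent vectors.

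First I would combine Proposition \ref{Prop:3dmuerr} with the hypotheses $\|\bm{\tilde{\mu}}\|_\infty\to 0$ (or $\|\bm{\hat{\mu}}\|_\infty\to 0$) and $h_n^2+\epsilon_n/h_n\to 0$ to conclude that $\mu_0 := \mu(\phi_2^{-1}\circ f\circ \phi_1)\circ \phi_1^{-1}$ converges to zero uniformly. Writing $g:=\phi_2^{-1}\circ f\circ \phi_1$, this says that $g$ is asymptotically conformal on $\mathcal{D}$, so Proposition \ref{Prop:angle} applies to $g$ acting on the sequence of planar point clouds $\phi_1^{-1}(\mathcal{P}_{n,1})$.

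Next, for fixed $p_0\in \mathcal{P}_{n,1}$ and $p_1,p_2\in T_n(p_0)\setminus\{p_0\}$, I would set $\tilde{p}_j=\phi_1^{-1}(p_j)$ and $\tilde{q}_j = \phi_2^{-1}(f(p_j))=g(\tilde{p}_j)$. Using $\epsilon_n = o(h_n)$ together with the Lipschitz bounds on $\phi_1$ and $\phi_2$, the preimage $\phi_1^{-1}(T_n(p_0))$ lies inside a Euclidean ball of radius $O(h_n)$ around $\tilde{p}_0$, and the same holds on the target side around $\tilde{q}_0$. A second-order Taylor expansion of $\phi_i$ at $\tilde{p}_0$ and $\tilde{q}_0$ respectively then gives
\begin{equation*}
p_j-p_0 = D\phi_1(\tilde{p}_0)(\tilde{p}_j-\tilde{p}_0)+O(h_n^2), \qquad f(p_j)-f(p_0)=D\phi_2(\tilde{q}_0)(\tilde{q}_j-\tilde{q}_0)+O(h_n^2).
\end{equation*}
Because $\phi_i$ is conformal, $D\phi_i$ is a scaled orthogonal map $\mathbb{R}^2\to T\mathcal{S}_i\subset\mathbb{R}^3$, and therefore preserves Euclidean angles. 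Repeating the arcsine-of-perturbation-ratio estimate used in the proof of Proposition \ref{Prop:angle}, I obtain
\begin{equation*}
|\alpha(p_1,p_0,p_2)-\alpha(\tilde{p}_1,\tilde{p}_0,\tilde{p}_2)|=O(h_n), \qquad |\alpha(f(p_1),f(p_0),f(p_2))-\alpha(\tilde{q}_1,\tilde{q}_0,\tilde{q}_2)|=O(h_n).
\end{equation*}
A final application of Proposition \ref{Prop:angle} to the planar map $g$ on $\phi_1^{-1}(\mathcal{P}_{n,1})$ gives $|\alpha(\tilde{q}_1,\tilde{q}_0,\tilde{q}_2)-\alpha(\tilde{p}_1,\tilde{p}_0,\tilde{p}_2)|=O(h_n)$; chaining the three $O(h_n)$ bounds yields the conclusion.

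The main obstacle, as the paper itself flags, is the bookkeeping mismatch between the PC conformal parameterization $\bm{\varphi}_{n,1}$ (used to define $T_n(p_0)$ and the PCBR hypothesis) and the true $\phi_1$ (used in the angle-transfer argument). The assumption $\epsilon_n=o(h_n)$ is exactly what lets this discrepancy be absorbed into the $O(h_n)$ bound: it controls the Hausdorff distance between $T_n(p_0)$ and the true $\phi_1$-pullback of $B_{\delta_n}$, keeps $\phi_i^{-1}(\mathcal{P}_{n,i})$ quasi-uniform with fill distance $O(h_n)$ so that the planar error analysis of MLS applies, and guarantees that Proposition \ref{Prop:3dmuerr} can be invoked on the sequence. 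Once these three items are set up carefully, the rest of the argument is a direct repetition of the planar case.
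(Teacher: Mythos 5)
Your proposal is correct and follows exactly the route the paper intends: the paper omits this proof entirely, remarking only that it is ``essentially the same as the 2D case,'' and your reduction --- using Proposition \ref{Prop:3dmuerr} to force $\mu(\phi_2^{-1}\circ f\circ\phi_1)\equiv 0$, transferring the angle comparison to the parameter domain via the angle-preserving differentials of the true conformal parameterizations with $O(h_n^2)$ Taylor remainders, and absorbing the $\bm{\varphi}_{n,1}$-versus-$\phi_1$ mismatch using $\epsilon_n=o(h_n)$ --- is the natural filling-in of that omission. In fact your write-up is more complete than the paper's, since it makes explicit the two transfer steps and the quasi-uniformity bookkeeping that the paper leaves implicit.
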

\begin{proof}
The proof is essentially the same as the 2D case. We will omit the proof here.
\end{proof}

Again, the above proposition is analogous to the fact that angles are preserved under a surface conformal map. It states that the local angle structure of the PC surface is well preserved under the PCQC map, if the norm of its PCBR is small. Hence, one can again obtain a PC map that minimizes local geometric distortions, by controlling the PCBR norm.

On the other hand, PCBR can also measure local changes in statistical properties, such as local covariance matrices, under the PCQC map.

\begin{prop}\label{Prop:3dpca}
With the same assumptions as in Proposition \ref{Prop:3dmuerr}, furthermore assume $\bm{\varphi}_1^{-1}(\mathcal{P}_1)$ has quasi-uniform constant $c_{qu}/2$. Let $p$ be an arbitrary point in $\mathcal{P}_1$ such that $B_\delta(p')\subseteq \mathcal{D}$ where $p' = \bm{\varphi}_1^{-1}(p)$ and $\delta = C_\delta h_1$. Let $M_1$ be the covariance matrix for $S= \bm{\varphi}_1(B_\delta(p'))\cap \mathcal{P}_1$, and $M_2$ be the covariance matrix for $f(S)$. Assume matrix $M_i$ has three positive eigenvalues $\lambda_{i,1} \geq \lambda_{i,2}\geq \lambda_{i,3}$, and eigenvectors $v_{i,j}$ corresponding to $\lambda_{i,j}$. Assume $h_1 = o(\lambda_{1,1}/\lambda_{1,2}-1)$.
\\
(a). Let $w=[w_1,w_2]^T$ be the solution of $\nabla \phi_1(p') w = v_{1,2}$ in the least square sense. Denote $g(h_1) = h_1+\frac{\epsilon}{h_1} + \frac{h_1}{\lambda_{1,1}/\lambda_{1,2}-1}$. Then $|\lambda_{i,3}|=O(h_1^3)$, $i=1,2$, and
$$
\left|\frac{\lambda_{2,1}}{\lambda_{2,2}} - \left(\frac{1+|T|}{1-|T|}\right)^2\right| = O\left(g(h_1)\right)
$$
where
$$
T = \frac{(\sqrt{\lambda_{1,1}}+\sqrt{\lambda_{1,2}})\tilde{\mu}(p) - (\sqrt{\lambda_{1,1}}-\sqrt{\lambda_{1,2}})(w_1+iw_2)^2}{\sqrt{\lambda_{1,1}}+\sqrt{\lambda_{1,2}}-(\sqrt{\lambda_{1,1}}-\sqrt{\lambda_{1,2}})(w_1-iw_2)^2\tilde{\mu}(p)}
$$
\\
(b). Let $u_{0,j} = [\cos(\theta_j),\sin(\theta_j)]^T$, where $\theta_1 = angle(T)/2$, and $\theta_2 = \theta_1 +\pi/2$. Assume $v_{2,j}\cdot df_p(u_j)\geq 0$, where $u_j = \mathbb{P}((M_1+\sqrt{\lambda_{1,1}\lambda_{1,2}}I)\nabla \phi_1(p')u_{0,j})$, and $\mathbb{P}$ is the projection operator from $\mathbb{R}^3$ to $T_p\mathcal{S}_1$. Let $\zeta=\mu(p)-\frac{\sqrt{\lambda_{1,1}}-\sqrt{\lambda_{1,2}}}{\sqrt{\lambda_{1,1}}+\sqrt{\lambda_{1,2}}}(w_1+iw_2)^2$, and $g(h_1) = o(|\zeta|)$. Then
$$
\left|v_{2,j} - \frac{df_p(u_j)}{\|df_p(u_j)\|}\right|=O\left(h_1+\frac{g(h_1)}{|\zeta|}\right)
$$
\end{prop}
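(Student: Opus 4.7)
The plan is to reduce Proposition \ref{Prop:3dpca} to the planar Proposition \ref{Prop:Pca} by pulling both covariance matrices back to $\mathcal{D}$ via the (near-)conformal parameterizations, applying the 2D result to the planar map $F := \bm{\varphi}_2^{-1} \circ f \circ \bm{\varphi}_1$, and tracking the extra error from $|\bm{\varphi}_i^{-1} - \phi_i^{-1}| \le \epsilon$ via Proposition \ref{Prop:3dmuerr}. Throughout I write $p' = \bm{\varphi}_1^{-1}(p)$ and $S' := \bm{\varphi}_1^{-1}(S) \subseteq \mathcal{D}$.

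Step 1 (pull back the covariances). I would Taylor-expand $\phi_1$ at $\phi_1^{-1}(p)$ and combine with the bound $|\bm{\varphi}_1^{-1} - \phi_1^{-1}| \le \epsilon$ to obtain $q - p = \nabla\phi_1(p')(q' - p') + O(h_1^2 + \epsilon)$ for every $q \in S$. Conformality of $\phi_1$ means $\nabla\phi_1(p') = c_1 E_1$ with $c_1 > 0$ and $E_1 : \mathbb{R}^2 \to T_p\mathcal{S}_1$ an isometric embedding; averaging the rank-one outer products then yields $M_1 = c_1^2 E_1 \tilde M_1 E_1^T + R_1$ with $\|R_1\| = O(h_1^3 + h_1\epsilon)$, and an analogous identity for $M_2$ with data $c_2, E_2, \tilde M_2$. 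The claim $\lambda_{i,3} = O(h_1^3)$ follows because on a neighborhood of radius $O(h_1)$ the surface deviates from $T_p\mathcal{S}_i$ by $O(h_1^2)$, pushing the normal eigenvalue of the covariance to $O(h_1^4)$. I also read off $\lambda_{1,i} = c_1^2 \tilde\lambda_{1,i} + O(\|R_1\|)$, $v_{1,2} \approx E_1 \tilde v_{1,2}$, and (solving the $3 \times 2$ system $c_1 E_1 w = v_{1,2}$ in the least-squares sense) $w = c_1^{-1}\tilde v_{1,2}$.

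Step 2 (apply the planar result and transfer). I would apply Proposition \ref{Prop:Pca} to $F$ on $\bm{\varphi}_1^{-1}(\mathcal{P}_1)$ at $p'$; by Definition \ref{Def:3dBelrep} its diffuse BC at $p'$ equals $\tilde\mu(p)$. Part (a) produces a planar quantity $\tilde T$ with $\tilde\lambda_{2,1}/\tilde\lambda_{2,2} = ((1+|\tilde T|)/(1-|\tilde T|))^2 + O(h_1)$, and the conformal rescalings $c_i^2$ cancel in ratios, so $\lambda_{2,1}/\lambda_{2,2} = \tilde\lambda_{2,1}/\tilde\lambda_{2,2}$ up to the relative perturbation $\|R_2\|/\tilde\lambda_{2,i} = O(h_1 + \epsilon/h_1)$. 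A direct algebraic substitution of $\sqrt{\lambda_{1,i}} = c_1 \sqrt{\tilde\lambda_{1,i}}$ and $w = c_1^{-1}\tilde v_{1,2}$ converts $\tilde T$ to the stated $T$, and a Davis--Kahan-type bound for transferring the eigenvector of $\tilde M_1$ through $R_1$ contributes the $h_1/(\lambda_{1,1}/\lambda_{1,2}-1)$ term in $g(h_1)$. For (b), apply Proposition \ref{Prop:Pca}(b) to $F$ at $p'$ to obtain a unit planar vector approximating $\nabla F(p')(\tilde M_1 + \sqrt{\tilde\lambda_{1,1}\tilde\lambda_{1,2}}\,I)\tilde u_{0,j}/\|\cdot\|$ with error $O(h_1 + h_1/|\zeta|)$; the chain-rule identity $df_p = \nabla\phi_2(\phi_2^{-1}(f(p))) \cdot \nabla F(p') \cdot \nabla\phi_1(p')^{+}$ together with $v_{2,j} \approx E_2 \tilde v_{2,j}$ pushes this forward to $\mathcal{S}_2$. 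The projection $\mathbb{P}$ enters because $(M_1 + \sqrt{\lambda_{1,1}\lambda_{1,2}}\,I)\nabla\phi_1(p') u_{0,j}$ generically carries a normal component (coming from the $\sqrt{\lambda_{1,1}\lambda_{1,2}}\,I$ term acting on directions normal to $T_p\mathcal{S}_1$) that must be removed before applying the tangential map $df_p$.

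The main obstacle I foresee is the algebraic step identifying the $T$ of the 3D statement with the $\tilde T$ produced by the planar Proposition \ref{Prop:Pca}: because $T$ is written in terms of the non-unit least-squares solution $w$ and the surface eigenvalues $\lambda_{1,i}$, one must verify a nontrivial cancellation of the conformal factor $c_1$ between $\sqrt{\lambda_{1,i}} = c_1 \sqrt{\tilde\lambda_{1,i}}$ and $(w_1 \pm i w_2)^2 = c_1^{-2}(\tilde v_{1,21} \pm i \tilde v_{1,22})^2$, together with a parallel identification for the surface eigenvectors $v_{2,j}$. A secondary obstacle is combining the three small parameters $h_1$, $\epsilon/h_1$, and $h_1/(\lambda_{1,1}/\lambda_{1,2}-1)$ coherently so that no cross term breaks the stated rate $g(h_1)$; this requires the spectral-gap hypothesis $h_1 = o(\lambda_{1,1}/\lambda_{1,2}-1)$ and the non-degeneracy hypothesis $g(h_1) = o(|\zeta|)$ to be used exactly to absorb, respectively, the $R_1$-perturbation on the source eigenvectors and the denominator of the planar eigenvector formula.
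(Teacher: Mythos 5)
Your overall route is the same as the paper's: pull the covariance matrices back to the parameter domain, apply the planar covariance analysis (Proposition~\ref{Prop:Pca}) to the composed planar map, transfer eigenvalues and eigenvectors through the conformal differentials $\nabla\phi_1,\nabla\phi_2$ (whose conformal factors cancel in the eigenvalue ratios and whose $O(h_1^3)$ covariance perturbation divided by the spectral gap produces the $h_1/(\lambda_{1,1}/\lambda_{1,2}-1)$ term), and charge the discrepancy between $\bm{\varphi}_i^{-1}$ and $\phi_i^{-1}$ at rate $\epsilon/h_1$. The paper does exactly this, except that it pulls back through the \emph{exact} parameterizations (setting $M_3$, $M_4$ to be the covariances of $\phi_1^{-1}(S)$ and $\phi_2^{-1}(f(S))$), runs the planar argument for the smooth map $\phi_2^{-1}\circ f\circ\phi_1$ with its continuous coefficient $\mu_0$, and only at the very end invokes Proposition~\ref{Prop:3dmuerr} to replace $\mu_0(p)$ by $\tilde{\mu}(p)$. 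This matters for your Step~2: Proposition~\ref{Prop:Pca} is stated for a smooth map on $\mathcal{D}$ (its constants involve $\|f\|_{C^2(\mathcal{D})}$ and Taylor expansions), so it cannot be applied verbatim to the point cloud map $F=\bm{\varphi}_2^{-1}\circ f\circ\bm{\varphi}_1$. You must apply it to $\phi_2^{-1}\circ f\circ\phi_1$ restricted to the pulled-back cloud and then swap in the PCBR via Proposition~\ref{Prop:3dmuerr}; your $\|R_2\|/\tilde{\lambda}_{2,i}=O(h_1+\epsilon/h_1)$ accounting covers the covariance side of this swap but the Beltrami-coefficient side needs to be stated explicitly.

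The one place where your argument as written would fail is the step you yourself flag as the main obstacle: the identification of the planar $\tilde{T}$ with the stated $T$. The substitution $\sqrt{\lambda_{1,i}}=c_1\sqrt{\tilde{\lambda}_{1,i}}$ and $w=c_1^{-1}\tilde{v}_{1,2}$ does \emph{not} make the conformal factor cancel: the term $(\sqrt{\lambda_{1,1}}+\sqrt{\lambda_{1,2}})\,\tilde{\mu}(p)$ scales by $c_1$, while $(\sqrt{\lambda_{1,1}}-\sqrt{\lambda_{1,2}})(w_1+iw_2)^2$ scales by $c_1\cdot c_1^{-2}=c_1^{-1}$ because $\|w\|\approx c_1^{-1}$ is not a unit vector. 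The residual relative discrepancy is of order $|c_1^{-2}-1|\cdot(\sqrt{\lambda_{1,1}/\lambda_{1,2}}-1)/(\sqrt{\lambda_{1,1}/\lambda_{1,2}}+1)$, which is generically not $O(g(h_1))$. The cancellation you need only occurs if $w$ is replaced by $w/\|w\|$ (equivalently, by the unit eigenvector $v_{3,2}$ of the pulled-back covariance, which is what appears in the paper's intermediate quantity $T_1$). So either normalize $w$ in your derivation and reconcile that with the statement, or observe that only the ratio $(\sqrt{\lambda_{1,1}}-\sqrt{\lambda_{1,2}})/(\sqrt{\lambda_{1,1}}+\sqrt{\lambda_{1,2}})$ multiplied by a \emph{unit-modulus} direction reproduces the Beltrami coefficient of the normalizing affine map $g$ from the planar proof. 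As it stands, your "direct algebraic substitution" asserts an identity that is false, and the proof of part (a) (and hence the formula for $\theta_1$ in part (b)) does not close without this repair.
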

\begin{proof}

Let $\mathcal{P}_3 = \phi_1^{-1}(\mathcal{P}_1)$,
$M_3$ be the covariance matrix of $\phi_1^{-1}(S)$, and $M_4$ be the covariance matrix of $\phi_2^{-1}(f(S))$.
Assume $M_{i}$ has two eigenvalues $\lambda_{i,1}\geq \lambda_{i,2}$, and eigenvectors $v_{i,j}$ corresponding to $\lambda_{i,j}$, for $i=3,4$.
Denote $x_1 = \phi_1^{-1}(p)$, and $x_2 = \phi_2^{-1}(f(p))$.
Here we assume that $\epsilon\leq h_1$.

Then we can choose $\epsilon/h_1$ small enough such that quasi-uniform constant of $\mathcal{P}_3$ is $(h_1+\epsilon)/(2h_1/c_{qu}-\epsilon)\leq c_{qu}$, hence $\mathcal{P}_3$ is a feasible point cloud with fill distance $h_3\leq h_1+\epsilon\leq 2h_1$.

By direct calculation, we have $\lambda_{3,i}=\Theta(h_1^2)$, $|\lambda_{3,1}/\lambda_{3,2}-\lambda_{1,1}/\lambda_{1,2}|=O(h_1)$, and $|\lambda_{1,3}|=O(h_1^3)$. For $j=1,2$,
$$
\left|v_{1,j} - \frac{\nabla \phi_1(x_1)v_{3,j}}{\|\nabla \phi_1(x_1)v_{3,j}\|}\right| = O\left(h_1+\frac{h_1^3}{\lambda_{3,1} - \lambda_{3,2}}\right)
$$
Similarly,
$\lambda_{4,i} = \Theta(h_1^2)$, and
$
\frac{\lambda_{4,1}}{\lambda_{4,2}} = \left(\frac{1+|T_1|}{1-|T_1|}\right)^2 + O(h_1)
$, where
$$
T_1 = \frac{(\sqrt{\lambda_{3,1}}+\sqrt{\lambda_{3,2}})\mu_0(p) - (\sqrt{\lambda_{3,1}}-\sqrt{\lambda_{3,2}})((v_{3,2})_1+i(v_{3,2})_2)^2}{\sqrt{\lambda_{3,1}}+\sqrt{\lambda_{3,2}}-(\sqrt{\lambda_{3,1}}-\sqrt{\lambda_{3,2}})((v_{3,2})_1-i(v_{3,2})_2)^2\mu_0(p)}
$$
Moreover, for $j=1,2$, let $u'_{0,j} = [\cos(\theta'_j), \sin(\theta'_j)]^T$, where $\theta'_1 = angle(T_1)/2$, and $\theta'_2 = \theta'_1+\pi/2$, then
$$
\left|v_{4,j} - \frac{\nabla (\phi_2^{-1}\circ f\circ \phi_1)|_{x_0}(M_3 + \sqrt{\lambda_{3,1}\lambda_{3,2}}I)u'_{0,j} }{\|\nabla (\phi_2^{-1}\circ f\circ \phi_1)|_{x_0}(M_3 + \sqrt{\lambda_{3,1}\lambda_{3,2}}I)u'_{0,j}\|}\right| = O\left(h_1+\frac{h_1}{|\zeta|} \right)
$$
For the map $\phi_2$, we have similar result, $|\lambda_{2,1}/\lambda_{2,2}-\lambda_{4,1}/\lambda_{4,2}|=O(h_1)$, and $|\lambda_{2,3}|=O(h_1^3)$. For $j=1,2$,
$$
\left|v_{2,j} - \frac{\nabla \phi_2(x_2)v_{4,j}}{\|\nabla \phi_2(x_2)v_{4,j}\|}\right| = O\left(h_1+\frac{h_1^3}{\lambda_{4,1} - \lambda_{4,2}}\right)
$$
Together with Proposition \ref{Prop:3dmuerr}, the equation in above statement can be obtained.
\qed
\end{proof}

Again, Proposition 16a is analogous to the fact that the dilation of the infinitesimal ellipse from a deformed infinitesimal circle under the QC map can be determined from the Beltrami differential.

We have now constructed the PCQC theories on PC surfaces sampled from simply connected Riemann surfaces. All the above definitions and statements only depend on local properties of the PC map. Hence, the theories can be easily generalized to PC surfaces sampled from general Riemann surfaces with arbitrary topologies, such as multiply-connected open or high genus closed surfaces. 

According to our developed theories, every PCQC map between two PC surfaces can be represented by its PCBR, which captures local geometric distortions of the deformation. With these concept of computational QC geometry, many existing QC based algorithms for geometry processing and shape analysis can be easily readily to PC data.

\section{Experimental Result}
We have numerically validated our proposed theories described in the last section. In this section, the numerical results are reported.

\subsection{Experiment on the choice of weight functions}
The MLS method is adopted in this paper to approximate partial derivatives on PCs. Several weight functions are often used in MLS methods, such as the Gauss function, Wendland function and cubic function. In this subsection, we compare these three weight functions and choose one for our experiments in the rest of this section. The Gauss, Wendland and cubic functions, denoted by $W_1$, $W_2$, and $W_3$ respectively, are given below:
\begin{align*}
W_1(x)&=\exp\left(-\frac{\delta^2 x}{h^2}\right)\\
W_2(x)&=(1-x)^4(1+4x)\\
W_3(x) &= \begin{cases} 2/3 - 4x^2 + 4x^3 & \quad \text{if } x <0.5\\
4/3-4x + 4x^2 - 4x^3/3 & \quad \text{if } x\geq 0.5\\ \end{cases}
\end{align*}
where $h$ is the separation distance and $\delta$ is radius of neighbourhood, which is chosen as $6h$ in our numerical experiments. To avoid the non-differentiablility on the central point, the weight function is chosen to be $w(d) = W_i(d^2)$ for each cases.

The test function we use is $f(x,y) = (x^2+1)sin(y)$.
We approximate both the test function and its first order partial derivatives using the MLS method with different weight functions. In each case, we denote the approximation of $f$ by $\hat{f}$, and the approximation of $\partial_jf$ by $\hat{f}_j$. The numerical error $e = \max_{j}\{\|f-\hat{f}\|_1, \|\partial_jf-\hat{f}_j\|_1\}$ is then computed. Note that the partial derivatives can either be approximated by the diffuse or standard derivatives.
The numerical error for each weight function is recorded in Table \ref{tab:weight}.

% For tables use
\bgroup
\def\arraystretch{1.2}
\begin{table}
% For LaTeX tables use
\center
\begin{tabular}{|c|c|c|c|c|c|c|}
\hline
\multirow{2}{*}{No. pts} &
\multicolumn{3}{ |c| }{Diffuse derivative} &
\multicolumn{3}{ |c| }{Standard derivative} \\
\cline{2-7}
&Gauss & Wendland & Cubic & Gauss & Wendland & Cubic  \\
\hline
$25^2$&	2.70E-4&	4.67E-4&	4.99E-4&	2.56E-4&	9.72E-4&	1.23E-3\\
$33^2$&	1.52E-4&	2.64E-4&	2.82E-4&	1.44E-4&	5.49E-4&	6.96E-4\\
$49^2$&	6.79E-5&	1.18E-4&	1.26E-4&	6.43E-5&	2.45E-4&	3.10E-4\\
$65^2$&	3.82E-5&	6.67E-5&	7.13E-5&	3.62E-5&	1.38E-4&	1.75E-4\\
$97^2$&	1.70E-5&	2.97E-5&	3.17E-5&	1.61E-5&	6.17E-5&	7.79E-5\\
$129^2$&	9.58E-6&	1.67E-5&	1.79E-5&	9.07E-6&	3.47E-5&	4.38E-5\\
\hline
\end{tabular}
\caption{Error of diffuse and standard derivatives using MLS approximation with different weight functions}
\label{tab:weight}       % Give a unique label
\end{table}
\egroup

From Table \ref{tab:weight}, it can be observed that the Gauss weight function performs better than other two weight functions for the test function $f$. Therefore, we will use the Gauss weight in the rest of our numerical experiments.

\subsection{Numerical validations of propositions}
In this subsection, we numerically validate the propositions proved in the last section. In our experiments, all input PCs are quasi-uniform. Hence, we use the $k$-NN neighborhood, instead of the disk neighborhood with fixed radius, to compute diffuse and standard PCBCs, where $k$ is set to be $25$. In order to show the convergence rates of errors in these propositions, we use a straight line to fit the data points in each graph and translate the line for easier comparison. The fitting lines after translation are plotted with red dash lines, whose slopes indicate the corresponding convergence rates. The actual error plots are shown with blue lines.

\subsubsection{Numerical validation of Proposition \ref{Prop:muerr}, \ref{Prop:Pca}, \ref{Prop:3dmuerr} and \ref{Prop:3dpca}}
In this part, we numerical validate Proposition \ref{Prop:muerr}, \ref{Prop:Pca}, \ref{Prop:3dmuerr} and \ref{Prop:3dpca} on 2D and 3D PCs respectively.

For the 2D case, we use the function $f(x,y) = [e^x, (x^2+1)sin(y)]^T$ as the underlying QC map. A sequence of PCs of decreasing fill distances are used to compute the PCBCs. The PCBCs are compared with the actual Beltrami coefficient of $f$ to compute convergence rates. In this experiment, $\mathcal{D}$ is chosen to be the unit rectangle. Each PC is union of vertices taken from a rectangle grid with length $h$. More specifically, we subdivide the unit rectangle to several small squares. The corresponding PC contains vertices of all squares. Note that the fill distance is proportional to the grid size. Thus, we use the grid size $h$ in our numerical error analysis, instead of the fill distance.

The numerical results are shown in Figure \ref{fig:pca-2d}. (A) and (B) show the coarsest PC data before and after the deformation by $f$, where the colormaps are given by the norm of the actual BC of $f$. (C) and (D) verify the error bounds of diffuse and standard PCBCs in Proposition \ref{Prop:muerr}. (E) and (F) verify the error bounds in Proposition \ref{Prop:Pca} (a) and (b) respectively. In each figure, $e$ is the sup-norm of error in the corresponding approximation. We plot $\log(e)$ versus $-\log(h)$ to show the convergence rate. In (C) and (D), we observe similar convergence rates as what we obtained in the propositions. On the other hand, (E) and (F) give convergence rates about 1.85 and 1.42, which are better than what we proved. It is mainly due to the regularity of the PC data. Moreover, the error of the standard PCBC is very close to error of the diffuse PCBC. In general, PCBCs computed by diffuse derivatives and standard derivatives are very similar. Hence, in the following experiments, we will only show results calculated using diffuse derivatives.

For the 3D case, we choose parameterization functions $\phi_1 = [cosh(x)cos(y),$ $cosh(x)sin(y), x]^T$ and $\phi_2 = [cos(y), sin(y), x]^T$. The Riemann surfaces in this experiment are therefore given by $\mathcal{S}_1 = \phi_1(\mathcal{D})$ and $\mathcal{S}_2 = \phi_2\circ f(\mathcal{D})$, where $\mathcal{D}$ and $f$ are the same as in our 2D experiment. To compute the PCBR, we need to compute PC $e-$conformal parameterizations for $\mathcal{S}_1$ and $\mathcal{S}_2$. Two algorithms for computing PC parameterization have been proposed by Zhao et al. in \cite{Hongkai3}, \cite{Hongkai1}, \cite{Hongkai2}. Here, we use the method reported in \cite{Hongkai2}, since the algorithm applied MLS approximation. We will now briefly describe the method. For each point, a local projection function is approximated, whose graph gives a local approximation of the surface. With this approximation, the MLS method is applied again to obtain the Laplace-Beltrami operator at that point. Suppose the PC surface satisfies the conditions as described in Definition \ref{Def:pcsurf}. The existence of a local injective PC projection map can be guaranteed. By solving the Laplace-Beltrami equation, a PC conformal parameterization can be obtained.

In this numerical experiments, we construct the input 3D PC data as follows. For each regular 2D PC $\mathcal{P}$ obtained in the last experiment, we choose $\mathcal{P}_1 = \phi_1(\mathcal{P})$ and $\mathcal{P}_2 = \phi_2\circ f (\mathcal{P})$ as our input 3D PC surfaces. Using the parameterization method in \cite{Hongkai2}, one can compute the PC parameterization $\bm{\varphi}_i$ for $\mathcal{P}_i$. The numerical errors of the PC parameterizations from the actual parameterizations, $\|\bm{\varphi}_i^{-1}(\mathcal{P}_i) - \phi_i^{-1}(\mathcal{P}_i)\|_\infty$, are shown in Figure \ref{fig:pca-3d} (A) and (B). Note that the numerical errors of PC parameterizations converge faster than linear convergence, hence the conditions in Proposition \ref{Prop:3dmuerr} and \ref{Prop:3dpca} are satisfied. The numerical errors as stated in Proposition \ref{Prop:3dmuerr} and \ref{Prop:3dpca} are computed. The results are shown in Figure \ref{fig:pca-3d} (C)-(I). We use the sup-norm error for the numerical analysis of diffuse and standard PCBRs. As for Proposition \ref{Prop:3dpca}, we consider the numerical error at one point only, because the values $\lambda_{1,1}/\lambda_{1,2}-1$ and $|\zeta|$ in the error bound are variant from point to point.

As one can observe from the graphs, the convergence rates of diffuse and standard PCBC are about second order, which validates the result of Proposition \ref{Prop:3dmuerr}. From the graphs (E)-(I), one can see that the ratio of first and second eigenvalues converges with a quadratic rate, and the eigenvectors also converge quadratically. Moreover, the third eigenvalues of both local covariance matrices $M_1$ and $M_2$ perform fourth order convergences.
These convergence rates are better than what we have proved, because of the regularity of the PC data.

\begin{figure}
% Use the relevant command to insert your figure file.
% For example, with the graphicx package use
  \includegraphics[width=\textwidth]{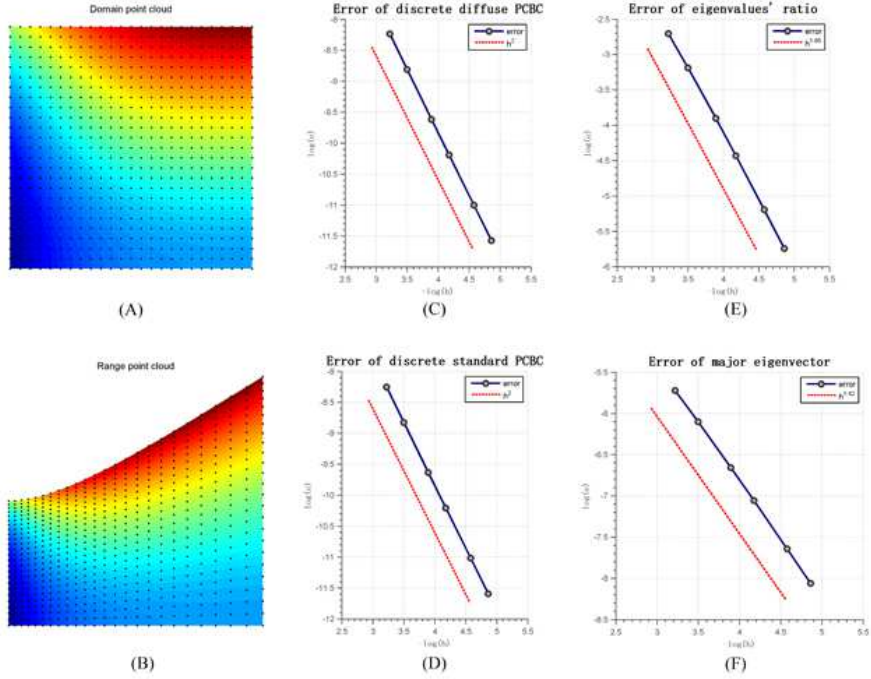}
% figure caption is below the figure
\caption{Numerical tests of Propositions \ref{Prop:muerr}, \ref{Prop:Pca} on regular planar PCs. (A) and (B) show the coarsest data before and after the PCQC deformation. (C) and (D) show the errors of diffuse and standard PCBCs. (E) and (F) show the errors of eigenvalues' ratio in Proposition \ref{Prop:Pca} (a) and (b).}
\label{fig:pca-2d}       % Give a unique label
\end{figure}

\begin{figure}
% Use the relevant command to insert your figure file.
% For example, with the graphicx package use
  \includegraphics[width=\textwidth]{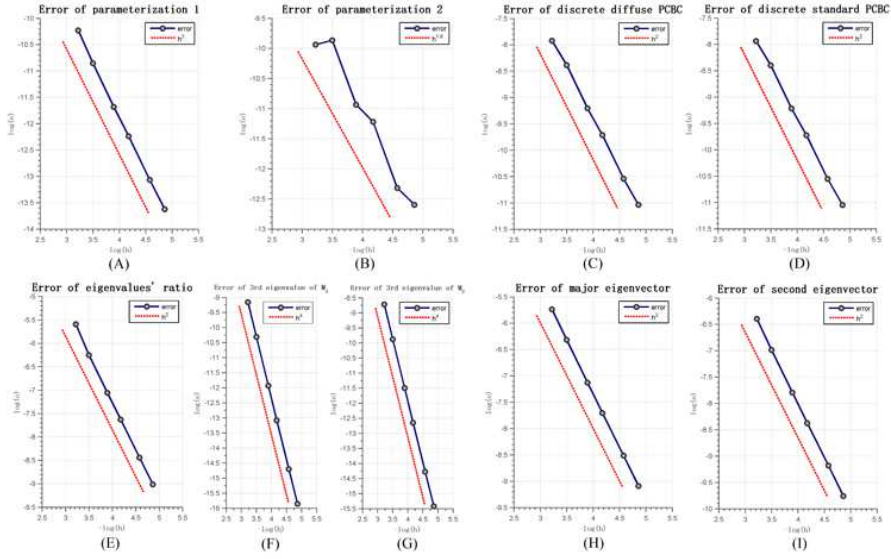}
% figure caption is below the figure
\caption{Numerical tests of Propositions \ref{Prop:3dmuerr}, \ref{Prop:3dpca} on regular PC surfaces. (A) and (B) show the errors of PC conformal parameterizations. The errors of diffuse and standard PCBRs are given in (C) and (D). (E) shows the error of 1st and 2nd eigenvalues' ratio in Proposition \ref{Prop:3dpca} (a). The absolute values of 3rd eigenvalues of $M_1$ and $M_2$ are shown in (F) and (G). The errors of 1st and 2nd eigenvectors in Proposition \ref{Prop:3dpca} (b) are shown in (H) and (I).}
\label{fig:pca-3d}       % Give a unique label
\end{figure}

\subsubsection{Proposition \ref{Prop:angle} and \ref{Prop:3dangle}}
In this part, we show two experiments for Propositions \ref{Prop:angle} and \ref{Prop:3dangle}. The test function is chosen to be $f(z) = 0.4(z+0.5i)^2 + 0.008(z-0.45-0.4i)^3+0.032(z-0.4-0.35i)^4$, while other setting remains the same as in last experiment.
The results are shown in Figure \ref{fig:ang-2d} and \ref{fig:ang-3d} for 2D and 3D cases respectively.

In Figure \ref{fig:ang-2d}, (A) and (B) present one PC data before and after mapping. We can see from (C) that the sup-norm of diffuse PCBC converges to $0$, which satisfies our condition. (D) shows the convergence rate of local angle changes.
In Figure \ref{fig:ang-3d}, the sup-norm errors of two PC parameterization maps are shown in (A) and (B). The convergence rates are faster than first order, which satisfies our assumption. Similar to 2D case, one can observe from (C) that sup-norm of diffuse PCBR converges to $0$, and (D) shows the convergence rate of local angle changes.

In these two tests, the slopes of best fitting lines have absolute values about 0.89 and 0.85 respectively, which are slightly smaller than 1. It is due to the unstable numerical errors of the PC conformal parameterization, as shown in Figure \ref{fig:ang-3d}(b). In order to observe the tendency of slopes when $h$ goes to $0$, we calculate the slopes of lines between each two consecutive data points. For 2D case, those slopes have absolute values about 0.81, 0.86, 0.89, 0.92, 0.94. For 3D case, the absolute values of slopes are computed to be 0.70, 0.79, 0.86, 0.90, 0.93. Therefore, in both cases, the slopes have a tendency to converge to 1. Moreover, the linear convergence is believed to be the best convergence rate.

\begin{figure}
% Use the relevant command to insert your figure file.
% For example, with the graphicx package use
  \includegraphics[width=\textwidth]{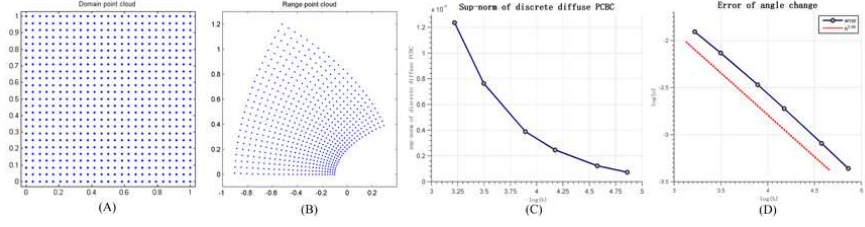}
% figure caption is below the figure
\caption{Numerical test of Proposition \ref{Prop:angle} on regular planar PCs. (A) and (B) show the coarsest data before and after the PCQC deformation. (C) shows the sup-norm of diffuse PCBC. (D) shows the errors of local angle changes in Proposition \ref{Prop:angle}.}
\label{fig:ang-2d}       % Give a unique label
\end{figure}

\begin{figure}
% Use the relevant command to insert your figure file.
% For example, with the graphicx package use
  \includegraphics[width=\textwidth]{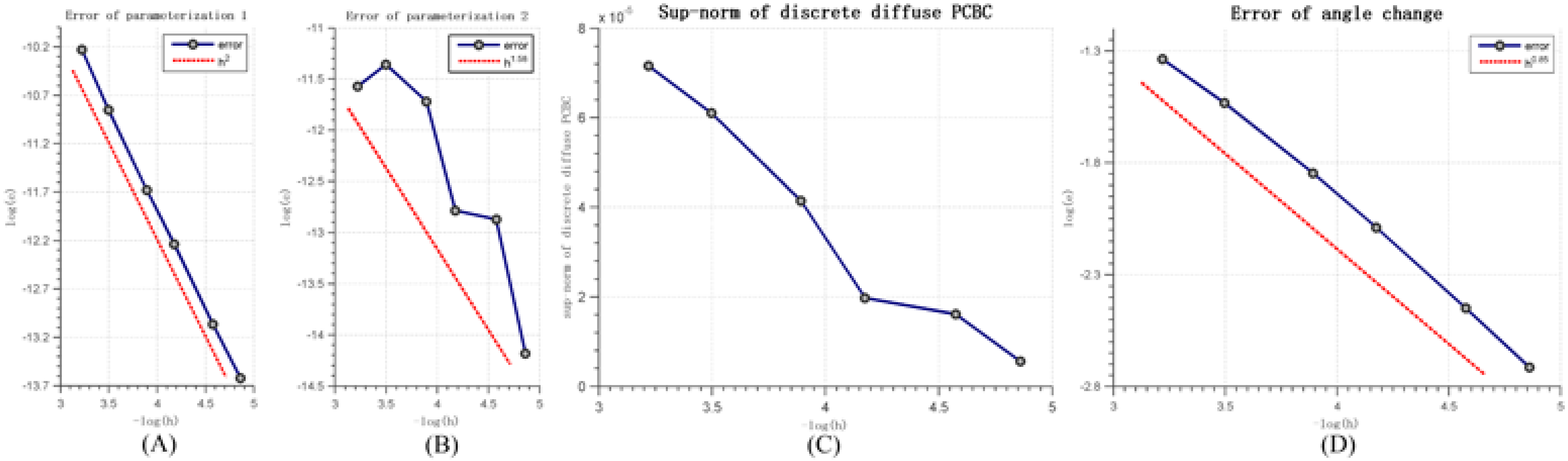}
% figure caption is below the figure
\caption{Numerical test of Proposition \ref{Prop:3dangle} on regular PC surfaces. (A) and (B) show the error of PC conformal parameterizations. (C) shows the sup-norm of diffuse PCBR. (D) shows the errors of local angle changes in Proposition \ref{Prop:3dangle}.}
\label{fig:ang-3d}       % Give a unique label
\end{figure}

\subsection{Quasi-conformal point cloud map calculated from Beltrami coefficient}
In this part, we demonstrate the effectiveness of representing PCQC map by its PCBC or PBBR. We obtain the PCQC map from a prescribed PCBC or PCBR using different meshless methods. To solve the PCQC map from the prescribed PCBC, one can solve either the Beltrami's equation \cite{Gardiner} or generalized Laplace equation \cite{LuiTMap}. 

The Beltrami's equation is given by $\partial_{\bar{z}} f = \mu(f)\partial_z f$, which is directly obtained from the definition of BC. This equation has a unique solution with suitable boundary condition. For the generalized Laplace system, let $\mu(f) = \sigma + i\tau$, then the PDE equation is given by $\nabla \cdot(A\nabla f) = 0$, where $A=\bigl(\begin{smallmatrix} a&b \\ b&c \end{smallmatrix} \bigr)$, and $a = \frac{1+|\mu|^2-2\sigma}{1-|\mu|^2}$, $b = \frac{-2\tau}{1-|\mu|^2}$, $c = \frac{1+|\mu|^2+2\sigma}{1-|\mu|^2}$. This equation can also be derived from definition, and it is of elliptic type since $\det(A)=1>0$. Furthermore, generalized Laplace equation also has unique solution under suitable boundary condition, and this PDE is just Laplace equation when $|\mu|=0$.

We use two discretization methods to solve these PDEs, including collocation method and element free Galerkin (EFG) method, which are two important meshless methods related to MLS approximation. For further information of these two methods, we refer the readers to \cite{EFG}, \cite{meshfreeintro}.

We have three experiments on three different functions given by
\begin{align*}
&f_1(x,y) = [e^x, (x^2+1)y]^T\\
&f_2(x,y) = [\sin(x), (x^2+1)\sin(y)]^T\\
&f_3(z) = 0.4(z+0.5i)^2 + 0.008(z-0.45-0.4i)^3+0.032(z-0.4-0.35i)^4
\end{align*}
 In the first example, we test each method on regular PCs sampled from rectangle grids, with exact solution $f_1$.
 In the second example, the input PCs are regular PCs with some random error, and the exact solution is function $f_2$.
 In the third example, we use test function $f_3$ on PCs consisting of vertices of triangular meshes.

For each computed solution $g_1$ and exact solution $g_0$ on PC, if $g_0$ is not zero function, the relative error $e$ is calculated by
$$
e = \frac{\sum |g_1(p_i) - g_0(p_i)|}{\sum |g_0(p_i)|}
$$
Otherwise, if $g_0(x) = 0$ for all $x$, we simply take the average of errors over every points.

For each experiment, the CPU time (in seconds) is given in Table \ref{tab:time}. The errors of numerical approximation for the PCQC maps from exact BCs, diffuse and standard PCBCs are recorded in Table \ref{tab:verr}, \ref{tab:dmuerr} and \ref{tab:smuerr} respectively.

\bgroup
\def\arraystretch{1.2}

\begin{table}
\center
\begin{tabular}{ |c|c|c|c|c|c| }
\hline
\multirow{2}{*}{No.} &
\multirow{2}{*}{pts num} &
\multicolumn{2}{ |c| }{Beltrami Equation} &
\multicolumn{2}{ |c| }{Generalized Laplace} \\ \cline{3-6}
& & collocation & EFG & collocation & EFG\\
\hline
\multirow{4}{*}{1}
&$24^2$&	0.183& 	6.444& 		0.590& 	4.917 \\
&$32^2$&	0.432& 	15.987& 	0.666& 	8.616 \\
&$48^2$&	1.141& 	35.707& 	1.610& 	19.381 \\
&$64^2$&	1.902& 	68.326& 	2.737& 	35.241 \\
\hline
\multirow{4}{*}{2}
&$24^2$&	0.186& 	7.554& 		0.360& 	4.237 \\
&$32^2$&	0.399& 	23.040& 	0.528& 	7.330 \\
&$48^2$&	0.931& 	31.314& 	1.335& 	16.399 \\
&$64^2$&	1.823& 	60.240& 	2.138& 	28.158 \\
\hline
\multirow{4}{*}{3}
&1047&	0.317& 	13.788& 	0.573& 	7.135 \\
&1807&	0.744& 	22.611& 	0.987& 	12.436 \\
&4132&	1.847& 	61.306& 	2.206& 	29.526 \\
&7185&	3.236& 	112.171& 	4.042& 	50.446 \\
\hline
\end{tabular}
\caption{CPU time (in second) to compute PCQC maps}
\label{tab:time}
\end{table}

\begin{table}
\center
\begin{tabular}{ |c|c|c|c|c|c| }
\hline
\multirow{2}{*}{No.} &
\multirow{2}{*}{pts num} &
\multicolumn{2}{ |c| }{Beltrami Equation} &
\multicolumn{2}{ |c| }{Generalized Laplace} \\ \cline{3-6}
& & collocation & EFG & collocation & EFG\\
\hline
\multirow{4}{*}{1}
&$24^2$&	3.037E-05&	2.020E-05&	3.706E-05&	3.263E-03\\
&$32^2$&	1.711E-05&	1.365E-05&	1.928E-05&	2.159E-03\\
&$48^2$&	7.668E-06&	3.056E-06&	7.926E-06&	9.497E-04\\
&$64^2$&	5.416E-06&	1.777E-06&	4.364E-06&	6.018E-04\\
\hline
\multirow{4}{*}{2}
&$24^2$&	2.588E-04&	1.928E-04&	2.545E-04&	1.372E-02\\
&$32^2$&	1.012E-04&	1.180E-04&	1.834E-04&	1.040E-02\\
&$48^2$&	3.772E-05&	3.790E-05&	5.061E-05&	6.994E-03\\
&$64^2$&	1.391E-05&	2.454E-05&	3.296E-05&	5.620E-03\\
\hline
\multirow{4}{*}{3}
&1047&	1.479E-05&	1.444E-05&	3.101E-06&	1.303E-01\\
&1807&	8.580E-06&	5.818E-06&	1.293E-06&	8.692E-02\\
&4132&	4.042E-06&	2.019E-06&	1.433E-06&	6.142E-02\\
&7185&	2.018E-06&	1.123E-06&	5.923E-07&	4.618E-02\\
\hline
\end{tabular}
\caption{Error of PCQC maps}
\label{tab:verr}
\end{table}

\begin{table}
\center
\begin{tabular}{ |c|c|c|c|c|c| }
\hline
\multirow{2}{*}{No.} &
\multirow{2}{*}{pts num} &
\multicolumn{2}{ |c| }{Beltrami Equation} &
\multicolumn{2}{ |c| }{Generalized Laplace} \\ \cline{3-6}
& & collocation & EFG & collocation & EFG\\
\hline
\multirow{4}{*}{1}
&$24^2$&	4.477E-05&	9.434E-04&	6.906E-04&	1.610E-01\\
&$32^2$&	2.384E-05&	5.332E-04&	3.839E-04&	1.287E-01\\
&$48^2$&	1.015E-05&	1.454E-04&	1.686E-04&	6.291E-02\\
&$64^2$&	7.254E-06&	1.087E-04&	9.599E-05&	4.055E-02\\
\hline
\multirow{4}{*}{2}
&$24^2$&	2.980E-04&	3.665E-03&	1.660E-03&	4.188E-01\\
&$32^2$&	1.664E-04&	2.732E-03&	1.458E-03&	4.190E-01\\
&$48^2$&	4.348E-05&	1.241E-03&	3.923E-04&	3.992E-01\\
&$64^2$&	1.703E-05&	1.053E-03&	2.593E-04&	3.977E-01\\
\hline
\multirow{4}{*}{3}
&1047&	9.719E-05&	9.882E-05&	1.907E-05&	1.934E+00\\
&1807&	7.198E-05&	5.451E-05&	1.007E-05&	1.691E+00\\
&4132&	5.201E-05&	2.922E-05&	4.797E-06&	2.028E+00\\
&7185&	3.491E-05&	2.005E-05&	2.813E-06&	1.840E+00\\
\hline
\end{tabular}
\caption{Error of diffuse PCBCs from computed PCQC maps}
\label{tab:dmuerr}
\end{table}

\begin{table}
\center
\begin{tabular}{ |c|c|c|c|c|c| }
\hline
\multirow{2}{*}{weight fn} &
\multirow{2}{*}{pts num} &
\multicolumn{2}{ |c| }{Beltrami Equation} &
\multicolumn{2}{ |c| }{generalized Laplace} \\ \cline{3-6}
& & collocation & EFG & collocation & EFG\\
\hline
\multirow{4}{*}{Example 1}
&$24^2$&	8.127E-06&	9.158E-04&	6.658E-04&	1.610E-01\\
&$32^2$&	3.554E-06&	5.182E-04&	3.708E-04&	1.286E-01\\
&$48^2$&	1.728E-06&	1.379E-04&	1.632E-04&	6.291E-02\\
&$64^2$&	4.039E-06&	1.045E-04&	9.305E-05&	4.055E-02\\
\hline
\multirow{4}{*}{Example 2}
&$24^2$&	2.847E-04&	3.616E-03&	1.657E-03&	4.149E-01\\
&$32^2$&	1.596E-04&	2.694E-03&	1.453E-03&	4.143E-01\\
&$48^2$&	3.902E-05&	1.222E-03&	3.914E-04&	3.946E-01\\
&$64^2$&	1.444E-05&	1.038E-03&	2.588E-04&	3.929E-01\\
\hline
\multirow{4}{*}{Example 3}
&1047&	1.432E-05&	2.506E-05&	2.056E-05&	3.822E-01\\
&1807&	9.558E-06&	1.325E-05&	1.092E-05&	3.267E-01\\
&4132&	6.291E-06&	7.047E-06&	5.152E-06&	3.651E-01\\
&7185&	4.240E-06&	5.386E-06&	3.024E-06&	3.597E-01\\
\hline
\end{tabular}
\caption{Error of standard PCBCs corresponding to computed PCQC maps}
\label{tab:smuerr}
\end{table}

\egroup

From the above experiments, we observe that the errors of numerical approximation for PCQC maps are smaller if the Beltrami's equation is solved using EFG method. However, the errors of the approximation from the diffuse and standard PCBCs are smaller when solving the generalized Laplace equation with the collocation method. In general, the errors of approximation for PCQC maps from diffuse PCBCs and standard PCBCs are quite small. It demonstrates that our proposed (diffuse or standard) PCBC provides a good representation of a PC deformation, which captures local geometric distortion of the deformation. Given a PCQC map, we can compute its diffuse or standard PCBC. Conversely, given a PCBC, we can solve for the associated PCQC map with tolerable errors.

Furthermore, solving the Beltrami's equation is more suitable for regular PCs, while solving generalized Laplace equation is more suitable for non-regular PCs. Therefore, in the next subsection, we will use the collocation method to solve generalized Laplace equation to obtain PCQC maps from the PCBCs, since PCs are mostly irregular in practice.

\subsection{Experiment on real data}
We have also performed experiments to solve for the PCQC map from a given PCBC. The results are shown in Figure \ref{fig:redbox} and \ref{fig:shapedball}. The PCs are downloaded from project Aim@Shape \cite{aimshape}. For each experiment, we give an artificial PCBR on each point. The PCQC map is then approximated on 2D parameter domain of the input BC. More precisely, the algorithm can be described below.

When a PC surface is given, the Laplace-Beltrami operator can be approximated on each point. In this work, we applied the algorithm in \cite{Hongkai2}. By solving the discrete Laplace-Beltrami equation, we can obtain a PC $e-$conformal parameterization. The PC surface is parameterized onto a 2D parameter domain. Then, we use collocation method to solve the generalized Laplace equation with the Gauss weight function to obtain the PCQC map.

In each of Figure \ref{fig:redbox} and \ref{fig:shapedball}, (A) shows the original PC surface. (B) shows PCQC parameterization of the input PC corresponding to the prescribed PCBR. (C) shows (in blue color) the histogram of the norm of the prescribed PCBR. We also compute the PCBR of the approximated PCQC parameterization. The histogram of the norm of the approximated PCBR is also shown (in red color). Note that the prescribed PCBRs closely resemble to  the approximated PCPRs in both cases. It demonstrates that the PCBR provides a good representation of a PC surface deformation. Given a PC surface deformation, we can compute its associated PCBR. Conversely, given a PCBR defined on each point, we can reconstruct a PCQC surface map, whose PCBR closely resembles to the prescribed PCBR. The representation of a PC deformation using its PCBR is especially useful since it captures the local geometric distortion. By incorporating the PCBR into the optimization model, an optimal PC deformation with minimal local geometric distortion can be obtained. It can be used for PC registration.

\begin{figure}
% Use the relevant command to insert your figure file.
% For example, with the graphicx package use
  \includegraphics[width=\textwidth]{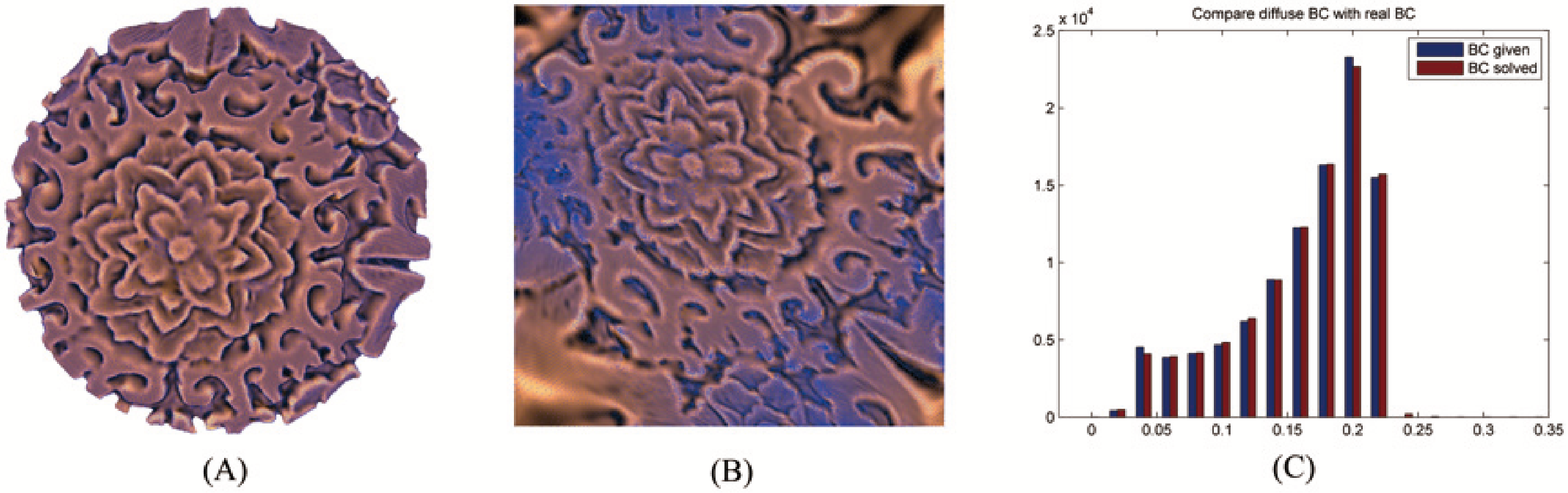}
% figure caption is below the figure
\caption{PCQC parameterization with prescribed PCBR on a PC surface (redbox). (A) shows the original PC surface. (B) shows the PC parameterization with the prescribed PCBR. (C) compares the norms of the PCBR of the computed PCQC parameterization and the prescribed PCBR.}
\label{fig:redbox}       % Give a unique label
\end{figure}

\begin{figure}
% Use the relevant command to insert your figure file.
% For example, with the graphicx package use
  \includegraphics[width=\textwidth]{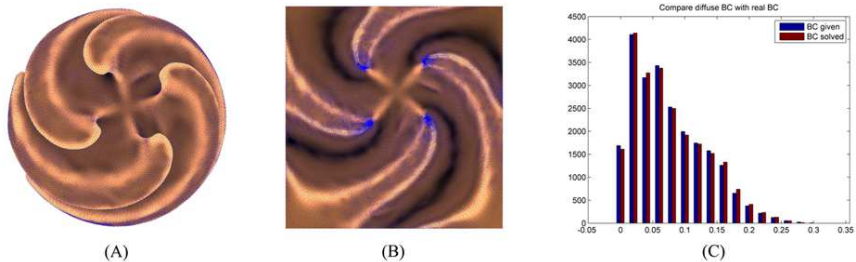}
% figure caption is below the figure
\caption{PCQC parameterization with prescribed PCBR on a PC surface (shaped ball). (A) shows the original PC surface. (B) shows the PC parameterization with the prescribed PCBR. (C) compares the norms of the PCBR of the computed PCQC parameterization and the prescribed PCBR.}
\label{fig:shapedball}       % Give a unique label
\end{figure}

\section{Conclusion}
In this paper, we develop the computational QC theories on PCs sampled from either simply connected planar domains or Riemann surfaces embedded in $\mathbb{R}^3$. The proposed theories can also be easily generalized to Riemann surfaces with arbitrary topologies. We define the concepts of PC quasi-conformal (PCQC) maps and their associated PC Beltrami coefficients (PCBCs), which are analogous to the Beltrami differential in the continuous setting. The PCBC converges to the continuous BC as the PC get denser and denser, under suitable conditions on the PCs. The PCBC also captures local geometric information of the PC deformation. We theoretically and numerically examine the ability of PCBCs to measure local geometric distortions under PC deformations. Extensive experiments on synthetic and real data have been carried out to validate our theoretical findings. In the future, we plan to apply our proposed theories to practical applications in computer graphics, medical imaging and computer vision.

\section{Appendix}
\subsection{Proof of Proposition \ref{Prop:Pca}(a)}
\begin{proof}
Firstly, one can calculate the range of $\lambda_1/\lambda_2$ by the following process. Let $S=\{p_i, \,i=1,...,n\}$, and $p_0 = (\sum_{i=1}^n p_i)/n$. By rotation with center $p_0$ and translation, we can obtain a new covariance matrix $M_3$ which is a diagonal matrix with $(M_3)_{11} = \lambda_1$ and $(M_3)_{22} = \lambda_2$. Let the new position of each point $p_i$ be $(x'_i,y'_i)$ and furthermore, $\sum_{i=1}^n (x'_i,y'_i)=(0,0)$. Then $\lambda_1 = (\sum_{i=1}^n (x'_i)^2)/n \leq \delta^2$, and $\lambda_2 \geq (\delta-2h)^2/n$ because there must be a point in the disk $B_h(0,\delta-h)$. Furthermore, since the $n$ disks $B_{q_\mathcal{P}}(p_i)$ are disjoint and inside $B_{q_\mathcal{P}+\delta}(0)$, by calculating their areas we can get
$$
n\leq \left( \frac{q_\mathcal{P} + \delta}{q_\mathcal{P}} \right)^2\leq \left(1+C_\delta c_{qu}\right)^2 =: N
$$
Therefore,
$$
\frac{\lambda_1}{\lambda_2} \leq \frac{N\delta^2}{(\delta-2h)^2}
 = \frac{NC_\delta^2}{(C_\delta-2)^2} =:C_0
$$

Let $D_1 = [p_1-p_0,p_2-p_0,...,p_n-p_0]^T$, then $M_1 = D_1^TD_1/n$. We consider Cholesky decomposition of $M_1$, one can find upper triangular matrix $U$ with positive diagonal entries such that $M_1 = U^TU$. Construct a map $g(x) = (U^T)^{-1}(x-p_0)+p_0$, which is a quasi-conformal map. Let $M_0$ be the covariance matrix of $g(S)$, we have
$$
M_0 = \frac{(U^T)^{-1}D_1^TD_1U^{-1}}{n} = (U^T)^{-1}M_1U^{-1} = I
$$
Let $\phi = f\circ g^{-1} = (\phi_1, \phi_2)$, and $a_i = g(p_i)$, $a_0 = g(p_0) = \sum_{i=1}^n a_i/n$. Then $\|a_i-a_0\|\leq \sqrt{Trace(nM_0)}\leq \sqrt{2N}$. Since for each point $a_i$, $\phi_1(a_i)-\phi_1(a_0) = \nabla \phi_1(a_0)^T (a_i-a_0) + (a_i-a_0)^T H_i (a_i-a_0)$, for some matrix $H_i=\nabla^2\phi_1(\xi_i)$, then
\begin{align*}
&\left|(M_2)_{11} - \nabla \phi_1(a_0)^T \nabla \phi_1(a_0)\right|\\
 = \ &\left|\frac{1}{n}\sum_{i=1}^n \left(\frac{1}{n} \sum_{j=1}^n (\phi_1(a_i)-\phi_1(a_j))\right)^2
 - \nabla \phi_1(a_0)^T \nabla \phi_1(a_0)\right|\\
 = \ &\left|\frac{1}{n}\sum_{i=1}^n \biggl( (a_i-a_0)^TH_i(a_i-a_0) - \frac{1}{n} \sum_{j=1}^n( (a_j-a_0)^TH_j(a_j-a_0))\biggr.\right.\\
  & +\biggl.\nabla \phi_1(a_0)^T(a_i-a_0)\biggr)^2
\left.-\frac{1}{n}\nabla \phi_1(a_0)^T\left( \sum_{i=1}^n (a_i-a_0)(a_i-a_0)^T\right) \nabla \phi_1(a_0) \right|\\
 \leq\  & \frac{2}{n}\sum_{i=1}^n\left(\left|(a_i-a_0)^TH_i(a_i-a_0)\right| + \frac{1}{n}\sum_{j=1}^n \left|(a_j-a_0)^TH_j(a_j-a_0)\right|\right)e_i
\end{align*}
\begin{comment}
\begin{align*}
&\left|(M_2)_{11} - \nabla \phi_1(a_0)^T \nabla \phi_1(a_0)\right|\\
 = \ &\left|\frac{1}{n}\sum_{i=1}^n \left(\frac{1}{n} \sum_{j=1}^n (\phi_1(a_i)-\phi_1(a_j))\right)^2
 - \frac{1}{n}\nabla \phi_1(a_0)^T\left( \sum_{i=1}^n (a_i-a_0)(a_i-a_0)^T\right) \nabla \phi_1(a_0)\right|\\
 = \ &\left|\frac{1}{n}\sum_{i=1}^n \left( \nabla \phi_1(a_0)^T(a_i-a_0) + (a_i-a_0)^TH_i(a_i-a_0) - \frac{1}{n} \sum_{j=1}^n( (a_j-a_0)^TH_j(a_j-a_0))\right)^2\right.\\
  &\ \left.-\frac{1}{n}\nabla \phi_1(a_0)^T\left( \sum_{i=1}^n (a_i-a_0)(a_i-a_0)^T\right) \nabla \phi_1(a_0) \right|\\
 \leq\  & \frac{2}{n}\sum_{i=1}^n\left(\left|(a_i-a_0)^TH_i(a_i-a_0)\right| + \frac{1}{n}\sum_{j=1}^n \left|(a_j-a_0)^TH_j(a_j-a_0)\right|\right)e_i
\end{align*}
\end{comment}
where
\begin{align*}
e_i = &\left|\nabla \phi_1(a_0)^T(a_i-a_0)\right| + \left|(a_i-a_0)^TH_i(a_i-a_0)\right| \\
&+ \left|\frac{1}{n} \sum_{j=1}^n( (a_j-a_0)^TH_j(a_j-a_0))\right|\\
\leq \ & \sqrt{2}\|\phi_1\|_{C^1(g(\mathcal{D}))}\sqrt{2N} + 4\|\phi_1\|_{C^2(g(\mathcal{D}))}2N
\end{align*}
According to the range of $\lambda_1$ and $\lambda_2$, we have
\begin{align*}
&\|\phi_1\|_{C^1(g(\mathcal{D}))}\leq \|f\|_{C^1(\mathcal{D})}\cdot\|U\|_\infty
\leq \|f\|_{C^1(\mathcal{D})}\sqrt{2Trace(M_1)}
\leq 2\|f\|_{C^1(\mathcal{D})}\delta\\
&\|\phi_1\|_{C^2(g(\mathcal{D}))}
\leq \|f\|_{C^2(\mathcal{D})}\cdot \|U\|_\infty^2
\leq 4\|f\|_{C^2(\mathcal{D})}\delta^2
\end{align*}
Therefore,
\begin{align*}
&\left|(M_2)_{11} - \nabla \phi_1(a_0)^T \nabla \phi_1(a_0)\right|\\
\leq \ & \frac{4}{n}\left(\sum_{i=1}^n \left|(a_i-a_0)^TH_i(a_i-a_0)\right|\right)\left(4\sqrt{N}\|f\|_{C^1(\mathcal{D})}\delta + 32N\|f\|_{C^2(\mathcal{D})}\delta^2\right)\\
\leq \ & 32\|f\|_{C^2(\mathcal{D})}\delta^2\left(4\sqrt{N}\|f\|_{C^1(\mathcal{D})}\delta + 32N\|f\|_{C^2(\mathcal{D})}\delta^2\right)2N\\
\leq \ & C_1h^3
\end{align*}
where $C_1 = 256N\|f\|_{C^2(\mathcal{D})}C_\delta^3\left(\sqrt{N}\|f\|_{C^1(\mathcal{D})} + 8N\|f\|_{C^2(\mathcal{D})}C_\delta h_0\right)$.

With similar argument, one can also prove $\left|(M_2)_{12} - \nabla \phi_1(a_0)^T \nabla \phi_2(a_0)\right| \leq C_1h^3$, and $\left|(M_2)_{22} - \nabla \phi_2(a_0)^T \nabla \phi_2(a_0)\right|\leq C_1h^3$.

Let $M_2 = \nabla \phi(a_0) \nabla \phi(a_0)^T + E$, where $|E_{ij}|\leq C_1h^3$. By solving the eigenvalues of $M_2$, we get
\begin{align*}
&\lambda_3 = \frac{trace(M_2)+\sqrt{trace(M_2)^2 - 4\det{M_2}}}{2}\\
&\lambda_4 = \frac{trace(M_2)-\sqrt{trace(M_2)^2 - 4\det{M_2}}}{2}
\end{align*}
Let $\Lambda = trace(M_2)$, $\Lambda_1 = trace(\nabla \phi(a_0) \nabla \phi(a_0)^T)$, $\Lambda_2 = trace(E)$, and $D=\det M_2$, $D_1=\det(\nabla \phi(a_0) \nabla \phi(a_0)^T)$, $D_2=\det E$. Assume $\nabla \phi(a_0) \nabla \phi(a_0)^T$ has two eigenvalues $\tilde{\lambda}_3$, $\tilde{\lambda}_4$, and $\tilde{\lambda}_3 \geq \tilde{\lambda}_4$.

By the matrix perturbation theory, %TODO ref:Ren-Cang Li (2014). "Matrix Perturbation Theory". In Hogben, Leslie. Handbook of linear algebra (Second edition. ed.). ISBN 1466507284.
$|\lambda_3-\tilde{\lambda}_3| = w_0^T E w_0 + O(\|E\|_2^2)\leq 2C_1h^3
$
where $w_0$ is the eigenvector of $M_2$ with respect to $\lambda_3$. Similarly, we have $|\lambda_4-\tilde{\lambda}_4|\leq 2C_1h^3$. Then consider the upper and lower bound of $\tilde{\lambda}_3$ and $\tilde{\lambda}_4$.
Notice that $D_1 = \det(\nabla \phi(a_0))^2\geq \det(\nabla f)^2\lambda_2^2$, and $\lambda_2\geq (\delta-2h)^2/n \geq C_2h^2$, where $C_2 = (C_\delta-2)^2/N$.
\begin{align*}
&|\tilde{\lambda}_3| \leq \Lambda_1 \leq 16\|f\|_{C^1(\mathcal{D})}^2\delta^2 = 16C_\delta^2\|f\|_{C^1(\mathcal{D})}^2h^2=:C_3h^2\\
&|\tilde{\lambda}_4| \leq |\tilde{\lambda}_3| \leq C_3h^2\\
&|\tilde{\lambda}_4| = \frac{2D_1}{\Lambda_1 + \sqrt{\Lambda_1^2-4D_1}} \geq \frac{D_1}{\Lambda_1}\geq \frac{L_1^2\lambda_2^2}{C_3h^2}
\geq \frac{L_1^2C_2^2h^2}{C_3}=:C_4h^2
\end{align*}
$L_1 = \min_{x\in \mathcal{D}}\det{\nabla f}$, which is positive since $\mathcal{D}$ is compact and $f$ is orientation-preserving.

Let $\tau$ be the Beltrami coefficient of $f\circ g^{-1}$, then
$$
\frac{\tilde{\lambda}_3 }{ \tilde{\lambda}_4} = \frac{(1+|\tau(a_0)|)^2}{(1-|\tau(a_0)|)^2}
$$
And
\begin{align*}
&\left|\frac{\lambda_3}{\lambda_4} - \left(\frac{1+|\tau(a_0)|}{1-|\tau(a_0)|}\right)^2\right|
\leq \frac{|\tilde{\lambda}_4| |\lambda_3-\tilde{\lambda}_3| + |\tilde{\lambda}_3| |\tilde{\lambda}_4-\lambda_4|}{|\tilde{\lambda}_4\lambda_4|}\\
\leq &\
\frac{4 C_1C_3h}{C_4(C_4 - 2C_1h_0)}=:C_5h
\end{align*}
Let $\sigma$ be the Beltrami coefficient of $g$, According to composition formula,
$$
\tau(a_0) = \frac{\mu(p_0)-\sigma(p_0)}{1-\overline{\sigma(p_0)}\mu(p_0)}\cdot \frac{g_z(p_0)}{\overline{g_z(p_0)}}
$$
In order to calculate $\tau(a_0)$, we calculate $\sigma(p_0)$ first.
By quasi-conformal theory, we have the following argument.
For arbitrary point $x$,
$$
\sigma(x) = \frac{\sqrt{\lambda_1}-\sqrt{\lambda_2}}{\sqrt{\lambda_1}+\sqrt{\lambda_2}}(v_{21}+iv_{22})^2
$$
where $v_2 = [v_{21}, v_{22}]^T$.
Then,
$
|\sigma(x)| \leq (\sqrt{C_0}-1)/(\sqrt{C_0}+1)
$.

Hence,
$$
T = \frac{\tilde{\mu}(p)-\sigma(p)}{1-\overline{\sigma(p)}\tilde{\mu}(p)}
$$
Since $\|\mu\|_\infty<1$, one can choose $h$ small enough such that $|\tilde{\mu}(x)|\leq (1+\|\mu\|_\infty)/2$ for all $x$. Since $\|p_0-p\|\leq \delta$, we have $|\mu(p_0)-\mu(p)|\leq 2\|\mu\|_{C^1(\mathcal{D})} \delta$. Let $C_6$ be the constant such that $|\mu(p) - \tilde{\mu}(p)|\leq C_6h^2$, then
\begin{align*}
\left| |\tau(a_0)| - |T|\right| &\leq
\left|\frac{(\mu(p_0) - \tilde{\mu}(p))(1-|\sigma(p)|^2)}{(1-\overline{\sigma(p)}\mu(p_0))(1-\overline{\sigma(p)}\tilde{\mu}(p))}\right|\\
&\leq \frac{2|\mu(p_0) - \tilde{\mu}(p)|}{(1-\|\mu\|_\infty)^2} \leq \frac{2(C_6h^2 + 2\|\mu\|_{C^1(\mathcal{D})} \delta )}{(1-\|\mu\|_\infty)^2} \leq C_7h
\end{align*}
where $C_7 = 2(C_6h_0 + 2C_\delta\|\mu\|_{C^1(\mathcal{D})})(1-\|\mu\|_\infty)^{-2}$.
Hence,
\begin{align*}
&\left|\left(\frac{1+|\tau(a_0)|}{1-|\tau(a_0)|}\right)^2 -
\left(\frac{1+|T|}{1-|T|}\right)^2\right|\\
\leq &\left| \frac{4(1-|\tau(a_0)|\cdot|T|)\cdot (|\tau(a_0)|-|T|)}{(1-|\tau(a_0)|)^2(1-|T|)^2}\right|\\
\leq &\frac{4C_7h}{(1-|\tau(a_0)|)^2(1-|T|)^2}
\end{align*}
And
\begin{align*}
&1-|\tau(a_0)|
\geq \frac{1}{2}\left(1-\left|\frac{\mu(p_0)-\sigma(p_0)}{1-\overline{\sigma(p_0)}\mu(p_0)}\right|^2\right)\\
=&\frac{(1-|\mu(p_0)|^2)(1-|\sigma(p_0)|^2)}{2|1-\overline{\sigma(p_0)}\mu(p_0)|^2}
\geq  \frac{2(1-\|\mu\|_\infty^2)\sqrt{C_0}}{C_0+1+2\sqrt{C_0}}=:C_8
\end{align*}
One can find $h$ small enough such that $1-|T|\geq C_8/2$.
Then,
$$
\left|\left(\frac{1+|\tau(a_0)|}{1-|\tau(a_0)|}\right)^2 -
\left(\frac{1+|T|}{1-|T|}\right)^2\right|
\leq \frac{8C_7h}{C_8^2} =: C_9h
$$
Therefore,
$$
\left|\frac{\lambda_3}{\lambda_4} - \left(\frac{1+|T|}{1-|T|}\right)^2\right| \leq (C_5 + C_9)h
$$
\qed
\end{proof}

\subsection{Proof of Proposition \ref{Prop:Pca}(b)}

\begin{proof}
Here, we use the same notation as in the above proof. By assumption, when $h$ is small, $\tau(a_0) \neq 0$,
which implies $\nabla \phi(a_0) \nabla \phi(a_0)^T\neq CI$.
Consider the linear transformation $L(x) = \nabla \phi(a_0) (x-a_0) + a_0$.
Then by linear algebra, $L$ maps the unit disk $B(a_0)$ to an ellipse whose major axis direction is parallel to the unit eigenvector $w_1$ of $\nabla \phi(a_0) \nabla \phi(a_0)^T$ with respect to greater eigenvalue. Without loss of generality, we can choose $w_0\cdot w_1\geq 0$.
By quasi-conformal theory, the direction of major axis is parallel to $\nabla \phi(a_0) u'_1$ where $u'_1 = [\cos(\gamma), \sin(\gamma)]^T$ and $\gamma = angle(\tau(a_0))/2$. Therefore, $w_1$ is parallel to $\nabla \phi(a_0) u'_1$.

Here, we define
$$
w_2 = \frac{\nabla \phi(a_0) u'_1}{\|\nabla \phi(a_0) u'_1\|}
$$
By the composition formula for Beltrami coefficient,
$$
\tau(a_0) = \frac{\mu(p_0)-\sigma(p_0)}{1-\overline{\sigma(p_0)}\mu(p_0)}\cdot \frac{g_z(p_0)}{\overline{g_z(p_0)}}
$$
Let $\theta' = angle\left((\mu(p_0)-\sigma(p_0))(1-\overline{\mu(p_0)}\sigma(p_0))\right)/2$, then from composition formula, $\gamma = \theta' + angle(g_z(p_0))$. Denote $U = \bigl[\begin{smallmatrix}
a&b \\ 0&c
\end{smallmatrix} \bigr]$, and $u'_2 = [\cos(\theta'), \sin(\theta')]^T$, then $g_z(x_0) = (a+c-bi)/(ac)$, and $u'_1 = A u'_2$, where
$$
A = \frac{1}{(a+c)^2+b^2}\left[ \begin{matrix}
a+c & b\\
-b & a+c
\end{matrix}\right]
$$
Hence
\begin{align*}
&\nabla \phi(a_0) u'_1 = \nabla f(p_0) U^T A u'_2\\
= \,&\frac{1}{(a+c)^2+b^2} \nabla f(p_0)(M_1 + \sqrt{\lambda_1\lambda_2}I) u'_2
\end{align*}
Therefore,
$$
w_2 =  \frac{\nabla f(p_0)(M_1 + \sqrt{\lambda_1\lambda_2}I) u'_2}{\|\nabla f(p_0)(M_1 + \sqrt{\lambda_1\lambda_2}I) u'_2\|}
$$

By assumption, $w_0\cdot w_1\geq 0$, and $w_0$, $w_1$ are unit eigenvectors related to greater eigenvalues of $M_2$ and $\nabla \phi(a_0) \nabla \phi(a_0)^T$ respectively.
From the proof in above subsection, $M_2 = \nabla \phi(a_0) \nabla \phi(a_0)^T+E$ where $|E_{ij}| \leq C_1h^3$, and $E$ is furthermore symmetric positive definite. By perturbation analysis of eigenvector, let $w_1'$ be the second unit eigenvector of $\nabla \phi(a_0) \nabla \phi(a_0)^T$, then
$$
|w_0-w_1| \leq 2\left|\frac{w_1^TEw_1'}{\eta_1 - \eta_2} w_1'\right| \leq \frac{4C_1h^3}{|\eta_1-\eta_2|}
$$
Here $\eta_1$ and $\eta_2$ are eigenvalues of $\nabla \phi(a_0) \nabla \phi(a_0)^T = \nabla f(p_0)M_1 \nabla f(p_0)^T$. Hence $\eta_1\eta_2 = \lambda_1\lambda_2\det(\nabla f(p_0))^2$, and $\eta_1/\eta_2 = (1+|\tau(a_0)|)^2/(1-|\tau(a_0)|)^2$. Then,
\begin{align*}
|\eta_1 - \eta_2| &=
\det(\nabla f(p_0))\sqrt{\lambda_1\lambda_2}\left(\frac{1+|\tau(a_0)|}{1-|\tau(a_0)|}-\frac{1-|\tau(a_0)|}{1+|\tau(a_0)|}\right)\\
&\geq 2L_1\frac{(\delta-2h)^2}{n}|\mu(p_0)-\sigma(p_0)| \geq L_1C_2h^2|\zeta|
\end{align*}
when $h$ is small enough, where $C_2 = (C_\delta-2)^2/N$, and $L_1$ be the lower bound of $\det(\nabla f(x))$, which is positive. Hence $|w_0-w_1|\leq 4C_1h/(L_1C_2|\zeta|)$.

From the proof in above subsection, we can obtain the following result.
$$
\left|\frac{\mu(p_0)-\sigma(p_0)}{1-\overline{\sigma(p_0)}\mu(p_0)} - \frac{\tilde{\mu}(p)-\sigma(p)}{1-\overline{\sigma(p)}\tilde{\mu}(p)}\right| \leq C_3 h
$$
for some constant $C_3$ depending on $f$.
Therefore, when $h$ is small enough,
\begin{align*}
&|u_0-u'_2|\leq |\theta - \theta'| \\
\leq&\ \frac{\pi}{4} \left|\frac{(\mu(p_0)-\sigma(p_0))(1-\overline{\mu(p_0)}\sigma(p_0))}{\|(\mu(p_0)-\sigma(p_0))(1-\overline{\mu(p_0)}\sigma(p_0))\|} - \frac{(\tilde{\mu}(p)-\sigma(p))(1-\overline{\tilde{\mu}(p)}\sigma(p))}{\|(\tilde{\mu}(p)-\sigma(p))(1-\overline{\tilde{\mu}(p)}\sigma(p))\|}\right|\\
\leq&\ \frac{\pi}{4}\cdot 2C_3h \cdot \left|\frac{1-\overline{\sigma(p_0)}\mu(p_0)}{\mu(p_0)-\sigma(p_0)}\right|\\
\leq&\ \frac{C_3 \pi h}{|\mu(p_0) - \sigma(p_0)|} \leq \frac{C_3\pi h}{2|\zeta|}
\end{align*}
Then,
\begin{align*}
&\ \left|\nabla f(p_0)(M_1+\sqrt{\lambda_1\lambda_2}I)u'_2 - \nabla f(p)(M_1+\sqrt{\lambda_1\lambda_2}I )u_0\right|\\
\leq &\ \left|(\nabla f(p_0)-\nabla f(p))(M_1+\sqrt{\lambda_1\lambda_2}I)u'_2\right|+\left| \nabla f(p)(M_1+\sqrt{\lambda_1\lambda_2}I )(u'_2-u_0)\right|\\
\leq & \ 2\sqrt{2}\|f\|_{C^2(\mathcal{D})}\|x_0-p\|(\lambda_1+\sqrt{\lambda_1\lambda_2})
+ 2\|f\|_{C^1(\mathcal{D})}(\lambda_1+\sqrt{\lambda_1\lambda_2}) \frac{C_3\pi h}{2|\zeta|}\\
\leq &\ 2\delta^2\left(2\sqrt{2}C_\delta\|f\|_{C^2(\mathcal{D})}
+ \frac{C_3\pi \|f\|_{C^1(\mathcal{D})}}{|\zeta|}\right) h =:C_4h^3 + \frac{C_5h^3}{|\zeta|}
\end{align*}
Let $\lambda$ be the smaller eigenvalue of $\nabla f(p) \nabla f(p)^T$, then we get the following result. Notice that $\lambda_2 \geq C_2h^2$, and $\lambda = \det(\nabla f(p))^2 (1-|\mu(p)|)^2(1+|\mu(p)|)^{-2}$.
\begin{align*}
&\ \left| w_2 - \frac{\nabla f(p)(M_1 + \sqrt{\lambda_1\lambda_2}I) u_0}{\|\nabla f(p)(M_1 + \sqrt{\lambda_1\lambda_2}I) u_0\|}\right|\\
\leq &\ \frac{2C_4h^3+2C_5h^3|\zeta|^{-1}}{\|\nabla f(p)(M_1 + \sqrt{\lambda_1\lambda_2}I) u_0\|}\\
\leq &\ \frac{2C_4h^3+2C_5h^3|\zeta|^{-1}}{\sqrt{\lambda}(\lambda_2 + \sqrt{\lambda_1\lambda_2})}\\
\leq &\ \frac{(C_4h^3+C_5h^3|\zeta|^{-1})(1+|\mu(p)|^2)}{C_2h^2(1-|\mu(p)|^2)\det(\nabla f(p))}\\
\leq &\ \frac{(C_4h+C_5h|\zeta|^{-1})(1+\|\mu\|_\infty^2)}{C_2(1-\|\mu\|_\infty^2)L_1}\\
=&: C_6\frac{h}{|\zeta|}
\end{align*}

By assumption, $w_0\cdot (\nabla f(p) (M_1 + \sqrt{\lambda_1\lambda_2}I) u_0) \geq 0$. And from the above proof, we have the following argument. When $h$ and $h/|\zeta|$ are small enough, $w_2$ is close to direction vector of $\nabla f(p) (M_1 + \sqrt{\lambda_1\lambda_2}I) u_0$, $w_0$ is close to $w_1$, and $w_1$ is parallel to $w_2$.
Hence $w_1=w_2$. Therefore,
\begin{align*}
&\ \left| w_0 - \frac{\nabla f(p)(M_1 + \sqrt{\lambda_1\lambda_2}I) u_0}{\|\nabla f(p)(M_1 + \sqrt{\lambda_1\lambda_2}I) u_0\|}\right|\\
\leq &\ \frac{4C_1h}{L_1C_2|\zeta|} + C_6\frac{h}{|\zeta|}
=: \ C_7h + C_6\frac{h}{|\zeta|}
\end{align*}
where $C_6$ and $C_7$ only depends on $f$.
\qed
\end{proof}

\bibliographystyle{abbrv}
%\bibliography{}   % name your BibTeX data base
\bibliography{biborder}{}

\end{document}